\newtheorem{theorem}{Theorem}
\newtheorem{lemma}[theorem]{Lemma}
\newtheorem{corollary}[theorem]{Corollary}
\definecolor{christian}{rgb}{0,0,0}
\newcommand{\cg}[1]{{\color{christian} #1}}
\definecolor{rodrigo}{rgb}{0,0,0}
\newcommand{\ro}[1]{{\color{rodrigo}#1}}
\definecolor{henrik}{rgb}{0,0,0}
\newcommand{\je}[1]{{\color{black}#1}}
\newcommand{\he}[1]{{\color{henrik}#1}}
\newcommand{\mc}[1]{\mathcal{#1}}
\newcommand{\Var}{\mathrm{Var}}
\newcommand{\e}{\mathrm{e}}
\newcommand{\ii}{\mathrm{i}}
\newcommand{\tr}{\mathrm{Tr}} 
\newcommand{\Tr}{\mathrm{Tr}} 
\newcommand{\one}{\mathbbm{1}}
\newcommand{\R}{\mc{R}}
\newcommand{\deff}{d^{\mathrm{eff}}}
\newcommand{\psecond}{p^{2^{\mathrm{nd}}}}
\newcommand{\T}{\mc{T}}
\newcommand{\norm}[1]{\left\Vert #1 \right\Vert}
\newcommand{\ket}[1]{\left.|{#1}\right.\rangle}
\newcommand{\bra}[1]{\left.\langle{#1}\right.|}
\newcommand{\ketbra}[2]{\ket{#1} \!\! \bra{#2}}
 \newcommand{\proj}[1]{\ketbra{#1}{#1}}
\renewcommand{\hat}{}
\begin{document}
\title{What it takes to shun equilibration}

\author{R.~Gallego}
\author{H.~Wilming}
\author{J.~Eisert}
\affiliation{Dahlem Center for Complex Quantum Systems, Freie Universit{\"a}t Berlin, 14195 Berlin, Germany}
\author{C.~Gogolin}
\affiliation{ICFO-Institut de Ciencies Fotoniques, The Barcelona Institute of Science and Technology, 08860 Castelldefels (Barcelona), Spain}
\begin{abstract}
Numerous works have shown that under mild assumptions unitary dynamics
 inevitably leads to equilibration of physical expectation values if many
 energy eigenstates contribute to the initial state.  Here, we consider systems
 driven by arbitrary time-dependent Hamiltonians as a protocol to prepare
 systems that do \emph{not} equilibrate. We introduce a measure of the
 resilience \ro{against equilibration of such states} and show, under \cg{natural} assumptions, that in order to increase the
 resilience against equilibration of a given system, one needs to possess a
 resource system which itself has a large resilience. \ro{In this  way,} we
 establish a new link between the theory of equilibration and resource theories
 by quantifying the resilience against equilibration and the resources that are
 needed to produce it.  We connect these findings with insights into local
 quantum quenches and investigate the (im-)possibility of formulating a second 
 law of equilibration, by studying how resilience can be either only
 redistributed among subsystems, if these remain completely uncorrelated, or in turn
 created in a catalytic process if subsystems are allowed to build up some correlations.
\end{abstract}
\maketitle
\section{Introduction}
When a complex quantum system is prepared in a pure state, it evolves in time
under its Hamiltonian indefinitely and, after sufficiently long passage of
time, even returns to its initial state if it is finite dimensional.
Nevertheless, many physical observables quickly approach a stationary value and
only rarely significantly deviate from this again.
Over recent years, great progress has been made in understanding this
phenomenon. On the one hand,  rigorous proofs for such behavior under very general
assumptions have been found \cite{Reimann08,Linden2010,Reimann12,Gogolin2016,Short2012}. 
On the other hand, it is the basis of a plethora of studies
of quantum many-body systems out of equilibrium \cite{1408.5148,PolkovnikovReview}, 
motivated from a condensed matter perspective.

A central quantity for proofs of equilibration in pure state quantum
statistical mechanics is the so-called \emph{effective dimension}.
Roughly speaking, it measures with how many energy eigenstates a given initial state
has a significant overlap. Provided that a system has a large effective
dimension, one can prove under very mild additional assumptions---such as a
non-degenerate energy gaps condition---that expectation values of 
subsystems \cite{Linden09} and observables
with a reasonably small number of outcomes equilibrate
\cite{Reimann08,Reimann2012,Reimann12}
(see Ref.\ \cite{Gogolin2016} for a review). In addition, systems with a large effective
dimension can be proven to equilibrate fast for certain observables with few
outcomes \cite{Malabarba2014} and systems fluctuate around equilibrium slowly
\cite{Linden2010}, which forces weakly coupled systems to decohere
\cite{Gogolin2010}. We note, however, that it is an important open problem to
prove general results that show how long it takes for a system to equilibrate, 
despite considerable recent progress on this question
\cite{Short2012,Goldstein2013,Goldstein2015,Garcia-Pintos2015a,Reimann2016,Masanes2013,Wilming2017,DeOliveira2017}.

It has been argued that the effective dimension is large in realistic
situations \cite{Reimann08,Reimann2012,Reimann12,Linden09}.
The core of such
arguments is that, due to the exponential growth of the dimension of the
Hilbert space of many-body systems, it is unreasonable to expect that an
initial state will have overlap with only few of them (except in gapped systems
at very low temperature).
Indeed, if the system has a reasonably large
energy uncertainty and its energy distribution is centered around some
value (no strongly bimodal distribution), then one can argue that
the effective dimension will grow exponentially with the system size,
hence ensuring equilibration for large enough systems.
While such arguments \ro{suggest} quite convincingly that it 
may be difficult
to bring a large many-body system to a state that would
\emph{not} equilibrate (and hence 
is not expected to happen spontaneously), they
do not quantify this difficulty.

Here, we provide such a quantification by analyzing the question from a
fresh perspective: this is the perspective of the emergent field of
quantum thermodynamics (see Refs.~\cite{Goold2015b,Nicole} for 
reviews), specifically that of resource theories, which is increasingly becoming important in that context.
In this mindset, we consider the task of preparing states that can avoid equilibration
when provided with another quantum system as a resource and allowing
for a very general set of operations which includes arbitrary evolutions of both systems under arbitrary time-dependent Hamiltonians.
This analysis, employing tools from resource theories
of purity or coherence
\cite{Horodecki2003,Aberg2006,Gour2015,Baumgratz2014,Winter2016,Streltsov2016,Marvian2016,StructureCoherence},
leads to simple formulas expressing lower bounds on the amount of resources
that are needed in order to prepare systems with a \cg{low} effective dimension.
Indeed, these bounds imply---given reasonable assumptions on the form of the
resources---that it is difficult to bring an arbitrarily large interacting
many-body system to a state which does not equilibrate, in the precise sense
that it would require a diverging amount of resources. This provides stronger
quantitative grounds to the heuristic arguments laid out above, suggesting that
systems should realistically be expected to have a large effective dimension
and thus equilibrate. We also connect these insights to the description of
quantum quenches.

We then go further and study in a general fashion how the property of resisting
equilibration behaves under composition of systems. That is, we treat such
property as a quantifiable resource and investigate how it can be created or
re-shuffled among subsystems when one operates globally on them by making them
interact and evolve under arbitrary time-dependent Hamiltonians.  In
particular, we show that it is possible to find certain systems, referred to as
catalysts, which can be used to bring other systems to a state resisting
equilibrate, whereas the catalyst remains itself invariant in the process.  The
implications of this insight are discussed in relation with what could be
called a second law of equilibration, putting forward also further challenges
within the study of equilibration properties of many body systems.

This work is structured as follows: We first briefly review equilibration
results in pure state quantum statistical mechanics. In
Section~\ref{sec:stationary}, we present our framework to prepare states
that do not equilibrate and provide general bounds for the resources
required in the case of stationary initial states. We then study in
Section~\ref{sec:correlations} how resilience can be either only
redistributed among subsystems, if these remain completely
uncorrelated, or in turn created in a catalytic process if subsystems
are allowed to build up arbitrarily weak correlations. Finally, in
Section~\ref{sec:non-stationary}, we extend our results to the case of
non-stationary states under additional assumptions on the process that
brings the system out of equilibrium.

\section{Equilibration in pure state quantum statistical mechanics}
Let us summarize some of the important equilibration results
\cite{Reimann08,Linden09,Reimann12,Gogolin2016} to highlight the relevance of
the effective dimension and set the notation. Consider a finite dimensional
quantum system evolving unitarily according to the Schr{\"o}dinger
equation through the states $\rho(t)$ starting from an initial state $\rho$ under a Hamiltonian $H$ with eigenvectors $\{\ket{E_k}\}_k$
and corresponding energies $\{E_k\}_k$. We assume that $H$ does not have any
degenerate energy gaps, i.e., $H$ is non-degenerate and any energy difference between
energy levels is unique. This condition is generically fulfilled for fully interacting Hamiltonians.
For simplicity we will henceforth assume that all
appearing Hamiltonians have this property. 

We denote by $\overline{\,\cdot\,}
\coloneqq \lim_{T\to\infty} 1/T \int_0^T \,\cdot\ \mathrm{d}t$ the infinite time-average.
Let $A$ be an arbitrary observable. Then its expectation value equilibrates to
the expectation value in the dephased or time-averaged state $\omega_H(\rho)
\coloneqq \sum_k p_k
\ketbra{E_k}{E_k}  = \overline{{\rho(t)}}$ with $p_k
\coloneqq \bra{E_k}\rho\ket{E_k}$, whenever the \emph{effective dimension}
\begin{align}\label{eq:def:effectivedimension}
  \deff_H(\rho) \coloneqq \frac{1}{\sum_k p_k^2} = \frac{1}{\tr(\omega_H(\rho)^2)}
\end{align}
is sufficiently large. This is captured by the bound
\begin{align} \label{eq:equilibrationonaverageforexpectationvalues}
	{\Var_H}(A,\rho) \coloneqq \overline{\Tr\big(A\,(\rho(t) - \omega_H(\rho)) \big)^2} \leq \frac{\|A\|^2}{\deff_H(\rho)}.
\end{align}
A similar
bound also holds in terms of the second largest
eigenvalue of $\omega_H(\rho)$ instead of the effective dimension
\cite{Reimann08,Reimann12}, but the formulation in terms of the effective
dimension is more suitable for our purposes as it will allow us to state our
results in terms of entropies 
\footnote{See Supplemental Material \ref*{effective_dimensions} at [URL
will be inserted by publisher] for equilibration results in terms of the second largest eigenvalue of $\omega_H(\rho)$ instead of the effective dimension and a comparison to the effective dimension and von~Neumann entropy of $\omega_H(\rho)$.}.

If the combined system is (mentally or physically) split into the tensor product of two
subsystems $S$ and $B$, it can be shown that the whole state $\rho^S(t) \coloneqq \Tr_B(\rho(t))$ of the subsystem
$S$ is, most of the time, close in trace distance to the local equilibrium state $\omega_H^S(\rho) \coloneqq \tr_B(\omega_H(\rho))$ in the sense that \cite{Linden09,Short2012}
\begin{equation}\label{eq:equilibration}
	\overline{\mc{D}\big(\rho^S(t),\omega_H^S(\rho)\big)} \leq 
	\frac{1}{2}\sqrt{\frac{d_S^2}{\deff_H(\rho)}},
\end{equation}
where $d_S$ denotes the dimension of
the Hilbert space of $S$. The trace distance measures the (single shot)
distinguishability under arbitrary observables, so, if the right hand side of
the above inequality is small, this means that for most $t$ there exists not a
single observable that would allow one to tell apart $\rho^S(t)$ from
$\omega_H^S(\rho)$. The above inequalities show that systems whose initial
states have a large effective dimension appear to equilibrate to great
precision and it is hence natural to ask: How difficult is it to prepare a
state of a quantum many-body system that can avoid equilibration?

\section{The cost of avoiding equilibration}
\label{sec:stationary}
Let us first, based on the role played by the effective dimension in the
equilibration bounds \eqref{eq:equilibrationonaverageforexpectationvalues} and
\eqref{eq:equilibration}, define the \emph{resilience (against equilibration)}
of a $d$ dimensional system initially in state $\rho$ evolving 
under the Hamiltonian $H$ as
\begin{equation}\label{eq:def_resilience}
 \R(\rho,H) \coloneqq \log \left(\frac{d}{\deff_H(\rho)} \right).
\end{equation}
Having a high resilience is a necessary condition for avoiding equilibration in
the long run.  To further illustrate its relation to equilibration, consider
for example $\rho(t)$ being the maximally mixed state for all $t$, which is, and
stays, equilibrated under any Hamiltonian. In this case $\R(\rho, H)=0$.  The
other extreme is given by a $\rho$ with $1< \deff_H(\rho) \ll d$\ro{, which} in general does not equilibrate. In this case one obtains
$\R(\rho,H) \approx \log d \propto n$, where $n$ is the size (number of
subsystems) of the whole system.

Note that a large resilience is a necessary, but not a sufficient, condition
for a system to avoid equilibration, as can be seen by taking $\rho(t)$ to be an
eigenstate of Hamiltonian. For our work this is not problematic, as we will be
concerned with obtaining lower bounds on the resources needed to prepare states
with a large resilience. Our results show that states that do not equilibrate
are difficult to prepare, whereas situations such as the case of $\rho(t)$ being
an eigenstate only show that in addition some states that do equilibrate can
also be difficult to prepare.

We now present a very general scenario that models possible preparations of
quantum states out of equilibrium and which includes the common settings of
quenches, ramps, and other control protocols.  We are given two quantum systems
$Q$ and $R$ in the product state $\hat\sigma^{Q} \otimes \hat\sigma^{R}$ and
with non-interacting Hamiltonian $H_i^{QR} \coloneqq H_i^Q \otimes \one^R +
\one^Q \otimes H_i^{R}$ (here we call a Hamiltonian non-interacting if and only
if it is exactly of this form).  For now,  let us assume that
both systems are stationary, i.e., 
$[\hat\sigma^{Q} \otimes \hat\sigma^{R},H_i^{QR}] = 0$.
Our aim is now to bring the system $Q$ out of equilibrium.
For this we are allowed \ro{to drive the evolution of the systems by changing the global Hamiltonian} in a completely arbitrary fashion \ro{and making $Q$ and $R$ interact}.
\ro{More formally, we let} the global system \ro{evolve} unitarily under
a time-dependent Hamiltonian $H^{QR}(t)$ for $t \in [t_i,t_f]$ from the initial Hamiltonian $H^{QR}(t_i)=H_i^{QR}$ to the final Hamiltonian $H^{QR}(t_f)\coloneqq H_f^{QR}= H_f^{Q} \otimes \one^{R} + \one^{Q}\otimes H_f^{R}$, which must again be non-interacting.
\he{Given a pair of initial and final Hamiltonians, one can in principle always find a trajectory $\alpha$ which implements on the state any possible unitary $U_\alpha$. 
Even going beyond this set of operations, for sake of generality, we allow the
use of a source of randomness to implement \ro{a} trajectory $\alpha$ between the two fixed Hamiltonians $H_i^{QR}$ and $H_f^{QR}$ with some probability $p_\alpha$.
On average the state on $QR$ is then transformed as $\hat\sigma^{QR}\mapsto \sum_\alpha p_\alpha U_\alpha \hat\sigma^{QR} U_\alpha^\dagger$.   
All possible protocols bringing a system out of equilibrium in this way result in a unital map $\Lambda$ on the given quantum state, i.e., $\Lambda$ fulfills $\Lambda(\one)=\one$. This is in fact the only property we need for our result below. 
To summarize, the process of bringing the system out of equilibrium can be described abstractly as
\begin{align}\
	(\hat\sigma^{QR}, H^{QR}_i)\mapsto (\Lambda(\hat\sigma^{QR}),H^{QR}_f) \eqqcolon (\rho^{QR},H^{QR}_f),
\end{align}
with $\Lambda(\one)=\one$.}
Note that we allow for an unrealistically
high degree of control over the precise trajectories of Hamiltonians, which only makes our bounds stronger, since we are interested
in lower bounding the resources needed to prepare systems that do not
equilibrate. Hence, incorporating fewer restrictions on the set of operations
can only strengthen our lower bounds.

Given the initial state and Hamiltonian and using the set of operations laid out above, the task is to optimally exploit
the resource system $R$ by choosing an optimal final Hamiltonian \he{and
time-dependent trajectories, resulting in the channel $\Lambda$,} to prepare a
state on $Q$ so that it does not equilibrate for the chosen final Hamiltonian.
Let us denote with $\rho_{t}^Q$ the time evolved state of $Q$ under $H_f^Q$ after $\Lambda$ has been applied.
Its initial condition is then
\begin{equation}\label{eq:transition}
 \rho^Q = \Tr_R\big(\Lambda(\hat\sigma^Q \otimes \hat\sigma^R)\big).
\end{equation}
In order to prepare $Q$ so that it does not equilibrate one needs to obtain a
sufficiently large resilience $\R(\rho^Q,H_f^Q)$. We are now in a position to
formulate our first main result, which provides general upper-bounds on
that resilience in terms of the initial state and Hamiltonian.

\begin{theorem}[Resources for preparing resilient states from stationary ones]\label{thm:mainresult}
Consider a system $Q$ and resource $R$ given in the product state $\hat
\sigma^{Q} \otimes \hat\sigma^{R}$ stationary with respect to the
non-interacting Hamiltonian $H_i^{QR}$. Then, for any final non-interacting
Hamiltonian $H_f^{QR}$ and operation of the form \eqref{eq:transition}, the
resulting state $\rho^Q$ fulfills
\begin{equation}\label{eq:mainresult}
 \R(\rho^Q,H_f^Q) - \R(\hat\sigma^Q,H_i^Q) \eqqcolon \Delta\R^Q \leq \R(\hat\sigma^R,H_i^R).
\end{equation}
\end{theorem}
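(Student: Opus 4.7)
The plan is to recast \eqref{eq:mainresult} using $\R(\rho,H)=\log d+\log\tr(\omega_H(\rho)^2)$ together with stationarity of $\hat\sigma^Q$ and $\hat\sigma^R$ (so that $\omega_{H_i^Q}(\hat\sigma^Q)=\hat\sigma^Q$ and $\omega_{H_i^R}(\hat\sigma^R)=\hat\sigma^R$). After taking differences the logarithms of $d_Q$ cancel, and the claim reduces to the single purity estimate
\begin{equation}
  \tr\bigl(\omega_{H_f^Q}(\rho^Q)^2\bigr)\;\leq\;d_R\,\tr\bigl((\hat\sigma^Q\otimes\hat\sigma^R)^2\bigr),
\end{equation}
which is what I would set out to prove.

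The key structural step is to exploit that $H_f^{QR}$ is non-interacting, so that dephasing in its joint eigenbasis factorises as $\tilde\omega\coloneqq\omega_{H_f^Q}\otimes\omega_{H_f^R}$. Combining this with the trivial identity $\tr_R\circ(\1\otimes\omega_{H_f^R})=\tr_R$ (since $\omega_{H_f^R}$ is trace-preserving) yields
\begin{equation}
  \omega_{H_f^Q}(\rho^Q)\;=\;\tr_R\bigl(\tilde\omega(\sigma^{QR})\bigr),\qquad \sigma^{QR}\coloneqq\Lambda(\hat\sigma^Q\otimes\hat\sigma^R).
\end{equation}
Writing the diagonal operator $\tilde\omega(\sigma^{QR})$ as a classical distribution $p_{k,l}$ on the joint energy levels, its $Q$-marginal is $q_k=\sum_l p_{k,l}$, and Cauchy--Schwarz gives $q_k^2\leq d_R\sum_l p_{k,l}^2$. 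Summing over $k$ produces
\begin{equation}
  \tr\bigl(\omega_{H_f^Q}(\rho^Q)^2\bigr)\;\leq\;d_R\,\tr\bigl(\tilde\omega(\sigma^{QR})^2\bigr).
\end{equation}

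To close the argument, I would use that both $\tilde\omega$ (a projection) and $\Lambda$ (unital by assumption) are purity non-increasing, i.e.\ they satisfy $\tr(\Phi(\tau)^2)\leq\tr(\tau^2)$ for every state $\tau$ and every unital $\Phi$. Applied successively to $\sigma^{QR}=\Lambda(\hat\sigma^Q\otimes\hat\sigma^R)$ this yields $\tr(\tilde\omega(\sigma^{QR})^2)\leq\tr((\hat\sigma^Q\otimes\hat\sigma^R)^2)$, which multiplicatively factorises as $\tr((\hat\sigma^Q)^2)\tr((\hat\sigma^R)^2)$. Chaining with the Cauchy--Schwarz estimate above delivers exactly the purity inequality from the first paragraph.

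The main obstacle, in my view, is spotting the right split of the bound: Cauchy--Schwarz on the marginal contributes the bare factor $d_R$, while unitality of $\Lambda$ and $\tilde\omega$ contributes the purity $\tr((\hat\sigma^R)^2)=1/\deff_{H_i^R}(\hat\sigma^R)$, and only their logarithms assemble together into $\R(\hat\sigma^R,H_i^R)=\log d_R-\log\deff_{H_i^R}(\hat\sigma^R)$. A minor subtlety is to dephase with the factorised map $\omega_{H_f^Q}\otimes\omega_{H_f^R}$ rather than with the spectral projectors of $H_f^{QR}$ itself, so that the argument does not implicitly invoke non-degenerate energy gaps for $H_f^{QR}$, a condition that fails generically for non-interacting Hamiltonians.
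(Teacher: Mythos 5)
Your proof is correct, and it reaches the result by a more elementary route than the paper. The paper packages the argument into two abstract lemmas proved in the Supplemental Material: monotonicity of $\R$ under unital channels, $\R(\Lambda(\hat\sigma),H_f)\leq\R(\hat\sigma,H_i)$ for stationary $\hat\sigma$ (applied to the unital map $\Tr_R\circ\Lambda$ and proved via the data-processing inequality for $D_2$), followed by additivity of $\R$ on stationary product states. Your version unpacks exactly these two steps into hands-on purity estimates: the Cauchy--Schwarz bound $\tr\bigl(\Tr_R(\tau)^2\bigr)\leq d_R\,\tr(\tau^2)$ for diagonal $\tau$ is the data-processing inequality for $D_2$ under partial trace made explicit, purity non-increase under the unital maps $\Lambda$ and $\tilde\omega$ is the data-processing inequality under dimension-preserving unital channels, and multiplicativity of purity on products is additivity of $D_2$. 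What your decomposition buys is transparency about the bookkeeping---the bare $\log d_R$ and the $\log\tr\bigl((\hat\sigma^R)^2\bigr)=-\log\deff_{H_i^R}(\hat\sigma^R)$ contributions to $\R(\hat\sigma^R,H_i^R)$ are traced to distinct inequalities---and self-containedness, since no general R\'enyi-divergence machinery is invoked. You also correctly flag a subtlety the paper relegates to a footnote: a non-interacting $H_f^{QR}$ generically violates the blanket non-degenerate-gaps assumption, and dephasing with the factorised map $\omega_{H_f^Q}\otimes\omega_{H_f^R}$ (justified by $\Tr_R\circ(\1\otimes\omega_{H_f^R})=\Tr_R$) cleanly sidesteps any appeal to $\omega_{H_f^{QR}}$ on the joint system. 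The paper's abstract formulation, in turn, buys reusability---the same two lemmas carry over verbatim to Theorem~\ref{thm:mainresult2} and to the correlated-output bound \eqref{eq:bound_correlated}---but for Theorem~\ref{thm:mainresult} itself the two arguments are mathematically equivalent.
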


Before providing the proof of Theorem~\ref{thm:mainresult} let us discuss its
interpretation and consequences. First note that the resulting resilience
$\R(\rho^Q,H_f^Q)$ is upper-bounded by a function of the initial state and
Hamiltonian only. If those initial conditions provide a small enough bound,
then it follows that no protocol from the very general set of operations
considered above can bring $Q$ from an equilibrating state to a state
which does not equilibrate. Secondly, note that $\Delta \R^Q$ is the change of
the resilience of $Q$.  Hence, Eq.~\eqref{eq:mainresult} states that the change
in the resilience of $Q$ is upper-bounded by the resilience present in the
resource $R$.  This implies that the resilience against equilibration behaves
like a ``thermodynamic resource'': in order to increase the resilience of
$Q$ (so that it has a chance of avoiding equilibration) we need to possess
another system $R$ from which to take this resilience.
While this result is phrased in a resource-theoretical
language, it readily applies to paradigmatic situations from the context of 
quantum systems out of equilibrium such as quenches, ramps, and general control protocols, underlining its physical significance.
\he{
\begin{corollary}[Local quenches]\label{thm:localquench} Consider a
many-body system $Q$ in a initial stationary state $\hat \sigma^Q$ and initial
Hamiltonian $H_i^Q$. Let $H_f^Q$ be any Hamiltonian on $Q$ and $\Phi$ be any quantum channel (not necessarily unital) that acts non-trivially
only on a subsystem $X$ of dimension $d_X$. Then
\begin{equation}\label{eq:localquench}
	\R(\Phi(\sigma^Q),H_f^Q) - \R(\hat\sigma^Q,H_i^Q)  \leq \log d_X.
\end{equation}
\end{corollary}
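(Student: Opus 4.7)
The plan is to deduce Corollary~\ref{thm:localquench} from Theorem~\ref{thm:mainresult} by constructing a resource $R$ of dimension exactly $d_X$ prepared in a pure eigenstate $|\psi\rangle\langle\psi|_R$ of some non-degenerate $H_i^R$. With this choice $\hat\sigma^R=|\psi\rangle\langle\psi|_R$ is stationary under $H_i^R$ with $\deff_{H_i^R}(\hat\sigma^R)=1$, so $\R(\hat\sigma^R,H_i^R)=\log d_X$. If $\Phi$ can be realised by an admissible unital channel $\Lambda$ on $QR$ that acts trivially on $Q\setminus X$, Theorem~\ref{thm:mainresult} immediately delivers $\R(\Phi(\hat\sigma^Q),H_f^Q)-\R(\hat\sigma^Q,H_i^Q)\leq \log d_X$.

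The technical heart is the dilation step. A generic channel $\Phi_X$ on $X$ has Kraus rank up to $d_X^2$, while a $d_X$-dimensional pure ancilla only accommodates Kraus rank $d_X$, so a naive Stinespring dilation would require an ancilla of dimension $d_X^2$ and would yield only the weaker bound $2\log d_X$. To get the sharp constant, I invoke Choi's characterisation of the extreme points of the set of CPTP maps on $X$: extremal channels have Kraus rank at most $d_X$, and every channel is a finite convex combination of extremal ones. Writing $\Phi_X=\sum_\alpha p_\alpha \Phi_\alpha$ with each $\Phi_\alpha$ extremal, each admits a unitary dilation $\Phi_\alpha(\cdot)=\Tr_R[V_\alpha(\cdot\otimes|\psi\rangle\langle\psi|_R)V_\alpha^\dagger]$ with $V_\alpha$ unitary on $XR$. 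Setting $\Lambda(\cdot)=\sum_\alpha p_\alpha (V_\alpha\otimes\1_{Q\setminus X})(\cdot)(V_\alpha^\dagger\otimes\1_{Q\setminus X})$ then produces a random-unitary---hence unital---channel on $QR$, exactly of the form admitted by Theorem~\ref{thm:mainresult}, since the theorem explicitly allows randomising over Hamiltonian trajectories.

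A direct computation then confirms $\Tr_R\Lambda(\hat\sigma^Q\otimes|\psi\rangle\langle\psi|_R)=\Phi(\hat\sigma^Q)$, closing the argument. The remaining verifications---that $\hat\sigma^Q\otimes|\psi\rangle\langle\psi|_R$ is stationary under the non-interacting $H_i^{QR}=H_i^Q\otimes\1^R+\1^Q\otimes H_i^R$, that $H_f^{QR}$ may be chosen non-interacting, and that each $V_\alpha\otimes\1_{Q\setminus X}$ is realisable by some time-dependent trajectory---are guaranteed by the hypotheses of Theorem~\ref{thm:mainresult}. The main obstacle I anticipate is precisely the Kraus-rank mismatch addressed above: the key observation is that the permission to randomise the trajectory, combined with Choi's extremal characterisation, trades ancilla dimension for spectral concentration of the resource and thereby yields the sharp constant $\log d_X$.
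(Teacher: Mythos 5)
Your proof is correct and follows the same overall strategy as the paper's: dilate the local channel onto an ancilla $R$ of dimension $d_X$ prepared in a stationary state, then apply Theorem~\ref{thm:mainresult} together with $\R(\hat\sigma^R,H_i^R)\leq\log d_X$. Where you diverge is in the dilation lemma itself, and your version is the more careful one. The paper's Supplemental Material asserts that any channel on a $d_X$-dimensional system can be written as $\Tr_R\left[U(\cdot\otimes\hat\sigma^R)U^\dagger\right]$ with a \emph{single} unitary $U$ and a (possibly mixed) ancilla of dimension $d_X$, citing Nielsen--Chuang; but a generic channel has Kraus rank up to $d_X^2$, and it is known that a single unitary with a $d_X$-dimensional mixed environment does \emph{not} suffice for all channels (this already fails for certain qubit channels with a one-qubit mixed environment). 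Your route---decomposing $\Phi_X$ into extremal channels of Kraus rank at most $d_X$ via Choi's characterisation, dilating each with a pure $d_X$-dimensional eigenstate ancilla, and exploiting the framework's explicit permission to randomise over trajectories so that $\Lambda$ is a mixture of unitaries and hence unital---closes this gap while retaining the sharp constant $\log d_X$ (a pure stationary ancilla gives $\deff_{H_i^R}(\hat\sigma^R)=1$, hence $\R(\hat\sigma^R,H_i^R)=\log d_X$ exactly). The remaining bookkeeping you list (stationarity of the product state under the non-interacting $H_i^{QR}$, freedom in choosing a non-interacting $H_f^{QR}$, realisability of each unitary by some trajectory) is indeed guaranteed by the hypotheses of Theorem~\ref{thm:mainresult}, so the argument is complete; if anything, it repairs a loose step in the paper's own proof.
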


This statement implies that local quenches, applied to some small subsystem, such as flipping a spin,
will alter
the resilience also only very little \footnote{See Supplemental Material \ref*{localquench} at [URL
will be inserted by publisher]  for the proof of the corollary using a Stinespring dilation.}.
A similar bound can be expected to hold if the Hamiltonian on $Q$ is
only locally modified in a time-dependent manner for a finite time and
with a bound on its norm, due to Lieb-Robinson bounds \cite{Lieb1972}.}

Let us consider now a further illustrating example \ro{of the general bound provided by Theorem \ref{thm:mainresult}}. Let the state $\hat\sigma^Q$ be a micro-canonical state for the initial Hamiltonian $H_i^Q$, with an energy window containing
$K(n)$ eigenstates. We allow the number of eigenstates to depend on the size
$n$ of the system arbitrarily, but typically one expects $K(n)$ to grow approximately exponentially
with $n$. Suppose that the resource consists of \ro{a} $m$-partite system with
local Hilbert space dimension $D$. In this case
one obtains
\begin{align}
	\deff_{H_f^Q}(\rho^Q)\geq \frac{K(n)}{D^m},
\end{align}
which implies that the size of the resource $m$ has to diverge if one wants to
prepare a state that can avoid equilibration, as long as the number of states
within the micro-canonical window grows with $n$. 

The proof of Theorem~\ref{thm:mainresult} is based on properties of
\emph{R\'{e}nyi-divergences} \cite{Tomamichel2016} (a similar proof can also be
carried out using Tsallis-entropies \cite{Tsallis1988}). For any two quantum
states $\hat\sigma$ and $\hat\tau$ the R\'{e}nyi-divergences
are defined to be
\begin{align}\label{eq:defrenyi}
 D_\alpha(\hat\sigma \| \hat\tau) \coloneqq \frac{1}{1-\alpha}\log\left(\hat\sigma^{\alpha}\,\hat\tau^{1-\alpha}\right),
\end{align}
with the corresponding R\'enyi-$\alpha$ entropies defined as $S_\alpha(\rho)\coloneqq \log(d) - D_\alpha(\rho\| \one/d)$.
The R\'{e}nyi-divergences for $\alpha\in[0,2]$ satisfy the data-processing inequality, i.e., for any
quantum channel $\Phi$ it holds that
\begin{equation}\label{eq:dpi}
 D_{\alpha}(\Phi(\hat\sigma)\| \Phi(\hat\tau)) \leq D_\alpha(\hat\sigma\|\hat\tau),
\end{equation}
and  they are additive on product states, in that
\begin{align}
 D_\alpha(\hat\sigma_A\otimes\hat\sigma_B \| \hat\tau_A\otimes\hat\tau_B) =
 D_\alpha(\hat\sigma_A\| \hat\tau_A) + D_\alpha(\hat\sigma_B \| \hat\tau_B).
\end{align}
Combining Eqs.~\eqref{eq:def:effectivedimension}, \eqref{eq:def_resilience}, and
\eqref{eq:defrenyi} one can see that one can write the resilience in terms of
the R\'{e}nyi-divergences of the dephased state as
\begin{align}\label{eq:connectiontsalliswithdeff}
	\R(\rho,H)= D_2\big(\omega_H(\rho)\| \one/d\big).
\end{align}
This quantity and variants thereof have been studied in the context of the
resource theory of coherence and purity \cite{Gour2015,Streltsov2016}.  Indeed,
$\R(\rho,H)$ can be understood as the purity of the time-averaged state of
$\rho$.
Using additivity and the data-processing inequality one can show the following
two properties:
First, for any state $\hat\sigma$ that is stationary with respect to \he{a Hamiltonian} $H_i$, any unital
channel $\Lambda$ and Hamiltonian $H_f$ one \he{has that
\begin{equation}
	\R(\rho,H_f)= \R(\Lambda(\hat\sigma),H_f) \leq \R(\hat\sigma,H_i).
\label{eq:monotonicity}
\end{equation}}
\he{Second, consider product initial states $\hat\sigma^Q \otimes \hat\sigma^R$ and a non-interacting Hamiltonian $H^{QR}_i$.
If $\hat\sigma^Q$ and $\hat\sigma^R$ are stationary under the Hamiltonians $H_i^Q$ and $H_i^R$, respectively, we have that the resilience is additive
\begin{equation}
 \label{eq:additivity_resilience}
 \R(\hat\sigma^Q \otimes \hat\sigma^R, H^{QR}_i) = \R(\hat\sigma^Q,H^Q_i)+\R(\hat\sigma^R,H^R_i).
\end{equation}}
Given these two properties, which are proven in the Supplemental Material
\footnote{See Supplemental Material \ref*{monotonicity_and_additivity} at [URL
will be inserted by publisher] for the proofs of Eqs.~\eqref{eq:monotonicity}
and \eqref{eq:additivity_resilience}.}, the proof of
Theorem~\ref{thm:mainresult} is straightforward.

\begin{proof}[Proof of Theorem~\ref{thm:mainresult}]
The concatenated map $\Tr_R \circ \Lambda$ is unital.
Hence, using \eqref{eq:transition} and \eqref{eq:monotonicity} one obtains that
\begin{equation}\label{eq:proof1}
\R(\rho^Q,H_f^Q) \leq \R(\hat\sigma^Q \otimes \hat\sigma^R,H_i^{QR}).
\end{equation}
In addition, $\hat\sigma^Q$ and $\hat\sigma^R$ are stationary states by assumption, hence one can use additivity \eqref{eq:additivity_resilience} which yields
\begin{equation}\label{eq:terms}
 \R(\rho^Q,H_f^Q) 
 \leq \R(\hat\sigma^Q,H_i^Q)+ \R(\hat\sigma^R,H_i^R) .
\end{equation}
\end{proof}

\section{The second law of equilibration and correlations}
\label{sec:correlations}
As we have discussed after Theorem~\ref{thm:mainresult}, the resilience against
equilibration behaves like a thermodynamic resource in the sense that it cannot
be created if it is not already present in another system.
Nonetheless, \eqref{eq:mainresult} does not specify how much of the resource's
resilience is spent in the process, it only quantifies how much must be present.
We now turn to this question and
show that it depends on the correlations created between $R$ and $Q$ in the process.

\he{We consider again stationary given states $\hat\sigma^Q\otimes\hat\sigma^R$ as well as initial and final
non-interacting Hamiltonians $H_i^{QR}$ and $H_f^{QR}$, respectively. We then}
implement an operation of the form \eqref{eq:transition}, with the
difference that we do not trace out the state of the resource, i.e., let
\begin{equation}
 \rho^{QR} \coloneqq \Lambda\big(\hat\sigma^Q \otimes \hat\sigma^R\big).
\end{equation}
From \he{the proof of} Theorem~\ref{thm:mainresult}, one obtains the bound
\begin{align}\label{eq:bound_correlated}
\R(\rho^{QR},H^{QR}_{\ro{f}}) \leq \R(\hat\sigma^Q,H^Q_{\ro{i}}) + \R(\hat\sigma^R,H^R_{\ro{i}}).
\end{align}
If the \he{resulting} state is product, $\rho^{QR} = \rho^{Q}
\otimes \rho^{R}$, one can show \footnote{See Lemma \ref{lemma:superadditivity} in the Supplemental Material at [URL
will be inserted by publisher].} that
\begin{align}
\R(\rho^Q,H^Q_{\ro{f}})+\R(\rho^R,H^R_{\ro{f}})\leq \R(\rho^Q \otimes\rho^R,H^{QR}_{\ro{f}}),
\end{align}
which implies in turn that $\Delta\R^{R}$, the change on the resilience of the resource $R$, satisfies
\begin{equation}\label{eq:second_law}
	\he{-\Delta \R^{R} = \R(\hat\sigma^R,H^R_{\ro{i}}) - \R(\rho^R,H^R_{\ro{f}}) \geq \Delta \R^Q .}
\end{equation}
This inequality can be interpreted as a kind of second law of equilibration
in the sense that to bring a system $Q$ in an initial state that can resist
equilibration, another system $R$ must have been brought (closer) to an equilibrating initial state.
In other words, the property of permanently remaining out of equilibrium, as measured by the
resilience, can only be re-shuffled between systems but not created.
Note however that \eqref{eq:second_law} relies on the assumption that the
produced state $\rho^{QR}$ is such that there are no correlations
between $Q$ and $R$. 
This assumption may in many physically reasonable situations be unjustified\cg{.
This leads us to the question of whether it is possible to derive a second
law of equilibration that holds in full generality or, whether in contrast it is
possible to make use of correlations between $Q$ and $R$ to violate \eqref{eq:second_law}}.
In the following theorem we show
that the latter is the case.
Indeed, this is shown in \cg{the} strongest possible form: $Q$ \cg{can be} brought to a
trajectory which does not equilibrate while $R$ remains unchanged \cg{and only
a vanishing amount of correlations between $Q$ and $R$ need to be created for this. 
Furthermore, this is possible for arbitrary large systems and even when the initial and final Hamiltonians are identical.}
\begin{theorem}[\he{No second law of equilibration}]\label{thm:no_second_law}
 Consider a family of $n$-partite many-body systems with increasing $n$.
 There are stationary states $\hat\sigma_{(n)}^Q$ \ro{and non-interacting  initial and final Hamiltonians $H^{QR}_{i,(n)}=H^{QR}_{f,(n)}\coloneqq H^{QR}_{(n)}$} such that for every $\epsilon>0$ there exist resource states $\hat\sigma_{(n)}^R$ and channels of the form $\Lambda^{(n)}(\cdot)=\sum_i p_i^{(n)} U_i^{(n)} \cdot U_i^{(n)}$ \ro{with the following properties:}
 \begin{enumerate}[(1)]\setlength{\itemsep}{0pt}\setlength{\parskip}{0pt}
 \item \label{enum:non_equilibrating} For all $n$, the resulting state
 \begin{align}
	 \rho^Q_{(n)}=\tr_R\left(\Lambda^{(n)}(\hat\sigma_{(n)}^Q \otimes \hat\sigma_{(n)}^R)\right)
 \end{align} on $Q$ does not equilibrate \ro{under $H^{Q}_{(n)}$}.
 \item \label{enum:resource_unchaged} For all $n$, the state of the resource remains unchanged upon application of the channel: 
  $\rho_{(n)}^R \coloneqq \Tr_Q\left(\Lambda^{(n)}(\hat\sigma_{(n)}^Q \otimes \hat\sigma_{(n)}^R)\right)=\hat\sigma_{(n)}^R$.
 \item \label{enum:final_correlations} For all $n$, the correlations between $R$
	and $Q$ as measured by the mutual information remain arbitrarily
	\he{small: 
	 \begin{align}
		 I(Q:R) \coloneqq D_1(\rho_{(n)}^{QR}\|\rho_{(n)}^Q \otimes
	 \rho_{(n)}^R) \leq \epsilon.
	 \end{align}}
 \item \label{enum:change_of_resiliance} The change in resilience of the systems $\Delta \R_{(n)}^Q$ diverges with $n\rightarrow \infty$.
 \end{enumerate}
\end{theorem}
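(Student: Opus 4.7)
The strategy is to exhibit a catalytic construction in the spirit of Jonathan--Plenio's catalytic majorization, or the more recent correlated-catalysis results. The essential observation is that the bound \eqref{eq:second_law} rests on the assumption that the final state is \emph{exactly} product on $QR$; once one allows arbitrarily small correlations between $Q$ and $R$, one can transfer a large amount of resilience to $Q$ while keeping the $R$-marginal \emph{exactly} fixed, at the expense of a very large catalyst. Throughout I would work in the energy eigenbases, using non-interacting Hamiltonians with non-degenerate gaps, and realise the channel $\Lambda^{(n)}$ as a convex combination of unitaries.

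For the $Q$-side of the construction, I would take $\hat\sigma^Q_{(n)}$ close to the maximally mixed state on a $K(n)$-dimensional subspace of $H^Q_{(n)}$'s eigenbasis, so that $\R(\hat\sigma^Q_{(n)},H^Q_{(n)}) \approx \log(d_Q/K)$. The target state would be
\begin{equation*}
  \rho^Q_{(n)} = (1-\delta)\proj{\psi} + \frac{\delta}{d_Q-2}\sum_{k\geq 2}\proj{E_k},
\end{equation*}
where $\ket{\psi} = (\ket{E_0}+\ket{E_1})/\sqrt{2}$ is a superposition of two energy eigenstates, ensuring genuine non-stationarity. The parameter $\delta$ is chosen just large enough to guarantee $S_1(\rho^Q_{(n)}) \geq S_1(\hat\sigma^Q_{(n)})$; this inequality is necessary because unitality of $\Lambda$ together with $\rho^R=\hat\sigma^R$ forces $S(\rho^Q) \geq S(\hat\sigma^Q) + I(Q:R) \geq S(\hat\sigma^Q)$. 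Choosing, e.g., $K = d_Q^{1/2}$ and $\delta$ of order a constant, one finds $\deff_{H^Q}(\rho^Q_{(n)}) = O(1)$ while $\Delta \R^Q_{(n)} \sim \tfrac{1}{2}\log d_Q$, which diverges with $n$; simultaneously the off-diagonal matrix element $\bra{E_0}\rho^Q\ket{E_1}$ is of constant order, so direct substitution into the variance formula under $H^Q_{(n)}$ (with non-degenerate gaps) gives $\Var_{H^Q}(A,\rho^Q_{(n)})$ of constant order for the observable $A = \ket{E_0}\bra{E_1}+\ket{E_1}\bra{E_0}$, which precludes equilibration.

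The $R$-side is where the main work lies. I would choose $\hat\sigma^R_{(n)}$ to be a slowly-decaying diagonal embezzler-type state on $N=N(n,\epsilon)$ levels (e.g.\ weights $r_j \propto 1/j$), stationary under an appropriate $H^R_{(n)}$ with non-degenerate gaps. The channel $\Lambda^{(n)}$ I would build from conditional rotations that, controlled by $R$'s level $j$, coherently rotate $Q$'s state toward the target while moving $R$'s level by one step; symmetrising the construction over upward and downward moves with suitable weights enforces by design that the average $R$-marginal is left exactly invariant. The slow variation of the embezzler's spectrum then ensures that the correlations created satisfy $I(Q:R) = O(1/\log N)$, which can be driven below any chosen $\epsilon$ by taking $N$ sufficiently large.

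Verification of the four properties is then direct: (4) and (1) are built into the construction of $\rho^Q_{(n)}$ as described above; (2) by the enforced symmetrisation of the conditional moves; (3) by the mutual-information bound. The principal obstacle is the \emph{simultaneous} enforcement of exact marginal preservation on $R$ and arbitrarily small $I(Q:R)$: these requirements are in tension in any naive construction, and it is this tension which the embezzling catalyst resolves at the unavoidable price of a diverging catalyst dimension $N(n,\epsilon)$.
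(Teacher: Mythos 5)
Your $Q$-side construction is essentially identical to the paper's: the paper also takes $\rho^Q_{(n)} = a\proj{\Psi} + (1-a)\Pi/(d-2)$ with $\ket{\Psi}$ a superposition of two energy eigenstates (giving $\deff = O(1)$, persistent Rabi oscillations, and hence non-equilibration), pairs it with a micro-canonical $\hat\sigma^Q_{(n)}$ maximally mixed on a subspace of dimension $d^{\gamma}$, and checks exactly the entropy inequality $S_1(\hat\sigma^Q)\le S_1(\rho^Q)$ together with the divergence of $S_2(\hat\sigma^Q_{(n)})$, which yields $\Delta\R^Q\to\infty$. Your observation that the von~Neumann entropy condition is \emph{necessary} (via $S(\rho^{QR})\ge S(\hat\sigma^{QR})$ from unitality plus $S(\rho^{QR}) = S(\rho^Q)+S(\rho^R)-I(Q{:}R)$) is correct and matches the role this condition plays in the paper.

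The genuine gap is on the $R$-side. The paper does \emph{not} construct the catalyst; it invokes, as a black box, a recent theorem of M\"uller (Ref.~\cite{Mueller2017}, reproduced as Theorem~\ref{thm:mueller} in the Supplemental Material), which states that the entropy condition $S(\hat\sigma^Q)\le S(\hat\rho^Q)$ alone already guarantees the existence of a finite-dimensional $\hat\sigma^R$ and a mixture of unitaries achieving the target marginal on $Q$ to accuracy $\epsilon$, with the $R$-marginal returned \emph{exactly} and $I(Q{:}R)\le\delta I$. Your proposal replaces this citation with a one-paragraph sketch of an embezzler-type catalyst with weights $r_j\propto 1/j$ and level-controlled conditional rotations, asserting that ``symmetrising over upward and downward moves'' enforces exact invariance of the $R$-marginal while $I(Q{:}R)=O(1/\log N)$. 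This is precisely the hard part of the entire argument, and the sketch does not establish it: a conditional shift on a finite ladder has boundary terms at $j=1$ and $j=N$ that generically perturb the marginal; requiring the perturbation to vanish \emph{exactly} (not approximately, as in standard embezzlement) while simultaneously producing the prescribed coherent marginal on $Q$ and keeping the mutual information small is exactly the tension you yourself name at the end --- and naming it is not resolving it. The proof of the cited result requires substantially more machinery (block structures interpolating between the initial and target spectra) than a symmetrised shift. As written, your argument proves properties (1) and (4) but only asserts (2) and (3); either carry out the catalyst construction in full or cite the result that provides it.
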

The proof of Theorem~\ref{thm:no_second_law} relies on a very recent result of Ref.~\cite{Mueller2017} which establishes conditions on possible transitions within the resource theory of thermodynamics and we provide all details in the Supplemental Material \footnote{See the Supplemental
Material \ref*{role_of_correlations} at [URL
will be inserted by publisher] for the proof of Theorem~\ref{thm:no_second_law} and a proof that the resilience of $\sigma_{(n)}^R$ diverges as $\epsilon\rightarrow 0$.}. 

Let us now discuss the implications of points~\ref{enum:non_equilibrating}--\ref{enum:final_correlations} of Theorem~\ref{thm:no_second_law}. We stress that
these implications are independent of the particular measure for resilience
against equilibration. As mentioned above, the theorem implies the existence of
initial states, in particular given by micro-canonical
states $\hat\sigma_{(n)}^Q$, which are perfectly equilibrated and can be brought
to quantum states $\rho_{(n)}^Q$ that do not equilibrate, as expressed by property \ref{enum:non_equilibrating}
while not spending any resources. In the proof of Theorem~\ref{thm:no_second_law}, we construct explicitly observables $A^{(n)}$  
so that $\Var_{H^Q_{(n)}}(A^{(n)},\rho_{(n)}^Q) \geq c\,\|A^{(n)}\|$ where $c>0$ is a constant independent of $n$. 
This, by property \ref{enum:resource_unchaged}, can be done while the state of $R$ remains unchanged.

Even more surprisingly, as $R$ does not change
in the process, it is hence possible to iterate this procedure 
with a sequence of many sub-systems $Q_1,\ldots,Q_m$, \he{which do not interact with each other,} by
re-using each time the same resource $R$, which is thus
independent of $m$. This procedure brings \emph{each} of the arbitrarily many
sub-systems to a state that does not equilibrate. More
sharply put, the theorem has the surprising implication that given the
right \cg{resource} $\hat\sigma^R$ and an environment in the micro-canonical state---composed of an arbitrarily large number of $m$ non-interacting
subsystems---one can, without \emph{spending} any \cg{of that resource}, turn the entire environment 
into a state that is (and remains) nowhere in equilibrium.
Note that
this is in sharp contrast with the situation covered by
Theorem~\ref{thm:mainresult2} concerning \emph{fully-interacting} systems.
This highlights and clarifies the importance of interactions, even arbitrarily small ones, for equilibration.
The extreme contrast between the non-interacting and the arbitrarily
weakly-interacting case is ultimately a result of the fact that an arbitrarily
small weak interaction can lead to equilibration after a sufficiently long
time. It is important to stress that by \emph{non-interacting} systems we refer to completely uncoupled systems and not quasi-free fermionic or bosonic systems. 

\cg{Let us now relate Theorem \ref{thm:no_second_law} to the second law like inequality \eqref{eq:second_law} to better understand the influence of correlations}.
First note that in full generality, even if $R$ and $Q$ are correlated, one
can use Eqs. \eqref{eq:monotonicity} and
\eqref{eq:additivity_resilience} to derive the condition \eqref{eq:bound_correlated}, which can be straightforwardly rewritten as
\begin{align}
	-\Delta \R^R\geq \Delta \R^Q + \mathcal{C}(\rho^{QR},H^{QR}),
\end{align} 
where the term $\mathcal{C}(\rho^{QR},H^{QR}):= \R(\rho^{QR},H^{QR})-\R(\rho^{Q},H^{Q})-\R(\rho^{R},H^{R})$ 
provides the correction to \eqref{eq:second_law} due to correlations. Then, a
consequence of property \ref{enum:change_of_resiliance} is that
$\mathcal{C}(\rho^{QR},H^{QR})$ can be made negative and arbitrarily large in
absolute value, even if the correlations are arbitrarily small as measured in
mutual information, as shown by property \ref{enum:final_correlations}. The
negativity of $\mathcal{C}(\rho^{QR},H^{QR})$ captures that $\R$ does not
fulfill super-additivity, which is ultimately a consequence of the R\'{e}nyi-2
entropy not fulfilling sub-additivity
\cite{Nielsen2000}. 

Interestingly, these implications of Theorem~\ref{thm:no_second_law} for the lack of 
super-additivity of $\R$ can be extended to alternative definitions of the
resilience, some of which are based on entropies other than the R\'{e}nyi-2
entropy. In the Supplemental Material \footnote{See the Supplemental Material
\ref*{effective_dimensions} at [URL will be inserted by publisher] for
a discussion on possible alternative definitions of resilience and
possible improvements of equilibration bounds and \ref*{shannon} for
the relation between Theorem~\ref{thm:no_second_law} and
subadditivity} we investigate in detail these
alternative definitions. Further, we show that it is impossible to construct bounds of the form
of Eqs.~\eqref{eq:equilibrationonaverageforexpectationvalues} and
\eqref{eq:equilibration} in terms of the R\'enyi entropies with $\alpha\leq 1$,
which also excludes the usual von~Neumann entropy corresponding to $\alpha=1$.
Since the known bounds \eqref{eq:equilibrationonaverageforexpectationvalues}
and \eqref{eq:equilibration} are formulated in terms of $\alpha=2$ (recall that $\deff_H(\rho) = \e^{S_2(\omega_H(\rho))}$), 
this only leaves open the possibility to derive improved versions of
\eqref{eq:equilibrationonaverageforexpectationvalues} and
\eqref{eq:equilibration} in terms of R\'enyi entropies with $1<\alpha<2$, non of which could lead to a sub-additive notion of resilience.

\section{Non-stationary states}
\label{sec:non-stationary}
In previous sections we considered the case where the provided states $\hat\sigma^Q$ and $\hat\sigma^R$ are stationary, which is arguably the most natural one in the context considered.
One is given systems that are always perfectly equilibrated and shall use them to prepare a system that can avoid equilibration.
Nonetheless, it is also possible to incorporate non-stationary states into our formalism
by including an extra assumption on the set of operations.
In order to motivate this assumption, let us first consider a simple example without resource 
system $R$.
Consider only the system $Q$ provided in a \he{pure state $\hat \sigma^Q$ and Hamiltonian $H_i^Q$ so that
$\hat\sigma^Q=\ketbra{\psi^Q}{\psi^Q}$ with
\begin{equation}
\ket{\psi^Q}=\frac{1}{\sqrt{d_Q}}\sum_{k}\ket{E_k},
\end{equation}
where $d_Q$ is the Hilbert space dimension of $Q$.}
This state is highly non-stationary, but it has a null resilience against equilibration.
It is further easy to construct a unitary evolution $U$ so that $\rho^Q=\Lambda
(\hat\sigma^Q)=U \hat \sigma^Q U^\dagger$ has a resilience proportional to $d_Q$.
This can be done by simply applying a unitary rotation that leaves the state
in a superposition of two eigenstates of the chosen final Hamiltonian $H_f^Q$.
This shows that preparing states that
do not equilibrate from states such as
$\hat\sigma^Q$, which are not stationary, is perfectly possible within our set of operations, even without employing any
resource state $R$.
Note, however, that since $\hat\sigma^Q$ 
is not stationary, applying the same
unitary $U$ at a slightly different time $T$ will in
general lead to a very different state $\Lambda \circ \T_{T}^{H^Q_i}
(\hat\sigma^Q) \neq \rho^Q$, where we introduced the time-evolution operator
\begin{align}
	\T_t^H(\cdot) \coloneqq \e^{-\ii H t} \cdot \e^{\ii H t}.
\end{align}
Hence, the subsequent time evolution cannot repeatedly be prepared
unless one has unrealistically fine control over the time $T$ which determines when we initiate the time-dependent evolution on the system.
This is crucial if one wants to reproduce many instances of the experiment in order to gather statistics and verify the non-equilibrating dynamics.

Therefore, it is interesting to consider \cg{unital preparation procedures}
$\Lambda$ that do not require such fine control over $T$.
More precisely, and now considering the problem in 
more generality by including a system $R$,
the extra assumption on the set of operations is that for every $t$ there exist
a $t'$ so that
\begin{equation}\label{eq:t_independence}
 \Lambda \circ  \T_{t}^{H_i^{QR}} (\hat\sigma^Q \otimes \hat\sigma^R) = \T_{t'}^{H^{QR}_f} \circ \Lambda  
 (\hat\sigma^Q \otimes \hat\sigma^R)
\end{equation}
(note that this contains formally as a particular case the situation of
stationary states $\hat\sigma^Q \otimes \hat\sigma^R$, for which \eqref{eq:t_independence} is
fulfilled taking $t'=0$). \he{Further note that we can also allow here for a family of unital maps $\{\Lambda_{s}\}$,
possibly depending on some time $s>0$, each of which fulfills~\eqref{eq:t_independence}.}

To analyse the condition \eqref{eq:t_independence} let us first consider the
simpler case of $H_i^{QR}=H_f^{QR}\coloneqq H$. There exist at least two well
known classes of operations that fulfill the condition in this case.
The first class is given by the so-called covariant maps
defined as the maps that fulfill
$\Lambda_s \circ \T_t^H = \T^{H}_t \circ \Lambda_s$ for all $t,s>0$.
Such covariant maps can
be motivated from different perspectives. Argued from the perspective of
resource theories,
they correspond to the
set of operations that can be performed without a reference frame for time (see
Ref.~\cite{Bartlett2007} and references therein) and relate to physically
relevant scenarios such as Hamiltonian evolution under the rotating-wave
approximation \cite{Lostaglio2017}. \he{We also note that a similar condition
appears in the context of fluctuation relations in the resource theory of
thermodynamics \cite{Alhambra2016}.}
An important example from the theory of many-body non-equilibrium dynamics is constituted by Markovian
dynamics reflected by a dynamical semi-group $s\mapsto \Lambda_s(\rho) = e^{{\cal L}s}(\rho)$ with the property that 
${\cal L}([H,\cdot])= [H,{\cal L}(\cdot)]$, \he{corresponding to some
dissipative dynamics}. \he{In the case $H^{QR}_i\neq
H^{QR}_f$, the
condition generalizes to ${\cal L}([H^{QR}_i,\cdot])= [H^{QR}_f, {\cal
L}(\cdot)]$.} 
It is important to stress, however, that \eqref{eq:t_independence} is substantially 
weaker than full covariance as it must only be
fulfilled for the particular state $\hat\sigma^Q \otimes \hat\sigma^R$ and allows for $t\neq t'$.

The second class is given by operations fulfilling $\Lambda =
\Lambda'\circ \omega_{H_i^{QR}}$, which in turn implies that $\Lambda \circ \T^{H_i^{QR}}_t=\Lambda$.
Such operations first put the system into a fully time-averaged state and then act on that.

For non-stationary provided states $\hat\sigma^{QR}$, in addition to assumption \eqref{eq:t_independence}, we need to put an assumption on the Hamiltonians.
Namely, we impose the condition $\omega_{H_i^{QR}} = \omega_{H_i^Q} \otimes \omega_{H_i^R}$, which is
generically fulfilled and simply implies that letting the two non-interacting
and initially uncorrelated systems $Q$ and $R$ equilibrate does not generate
correlations between equilibrating observables.
For example, it follows if there are no distinct
eigenvalues $E^Q_k\neq E^Q_l$ of $H_i^Q$ and $E^R_m\neq E^R_n$ of $H_i^R$ so that
$E^Q_k-E^Q_l=E^R_m-E^R_n$ \footnote{We assumed for the results on equilibration to
hold that $H^Q$ and $H^R$ have no degeneracies.
Hence, this condition implies that $H^{QR}$ has also no degeneracies.
But Eq.~\eqref{eq:additivity_resilience} actually still holds even if both
$H^Q$ and $H^R$ have degeneracies, as long as adding them does not
create additional ones.}.
We are now ready to formulate generalization of Theorem~\ref{thm:mainresult} to
non-stationary $\hat\sigma^Q,\hat\sigma^R$ for unital maps
$\Lambda$ that fulfill condition \eqref{eq:t_independence}.

\begin{theorem}[Resources needed to prepare resilient states]\label{thm:mainresult2} 
	\he{Consider a system $Q$ and resource $R$ provided in the product
	state $\hat \sigma^{Q} \otimes \hat\sigma^{R}$ and with Hamiltonian
$H_i^{QR}$ fulfilling $\omega_{H_i^{QR}} = \omega_{H^Q_i}\otimes\omega_{H^R_i}$. 
Then, for any final non-interacting Hamiltonian
$H_f^{QR}$ and operation of the form \eqref{eq:transition} with $\Lambda$
fulfilling Eq.~\eqref{eq:t_independence}, the resulting state $\rho^Q$
fulfills}
\begin{equation}\label{eq:mainresult2}
 \R(\rho^Q,H_f^Q) - \R(\hat\sigma^Q,H_i^Q) \eqqcolon \Delta\R^Q \leq \R(\hat\sigma^R,H^R_i).
\end{equation}
\end{theorem}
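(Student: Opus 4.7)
The plan is to reduce Theorem~\ref{thm:mainresult2} to the argument already used for Theorem~\ref{thm:mainresult} by first establishing the key identity
\[
\omega_{H_f^Q}(\rho^Q)=\omega_{H_f^Q}\circ\tr_R\circ\Lambda\bigl(\omega_{H_i^Q}(\hat\sigma^Q)\otimes\omega_{H_i^R}(\hat\sigma^R)\bigr).
\]
This exhibits the dephased reduced output state as the image of a stationary product state under a quantum channel that preserves the maximally mixed state, after which data-processing for $D_2$ and additivity on product states yield \eqref{eq:mainresult2} essentially for free.

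First, I would derive the identity. Since $H_f^{QR}$ is non-interacting, $\T_{t'}^{H_f^{QR}}=\T_{t'}^{H_f^Q}\otimes\T_{t'}^{H_f^R}$, and partial-tracing $R$ on the right-hand side of the covariance-like condition \eqref{eq:t_independence} collapses it to $\T_{t'(t)}^{H_f^Q}(\rho^Q)$. Applying $\omega_{H_f^Q}$ to both sides then annihilates the $t'(t)$ dependence, since $\omega_{H_f^Q}\circ\T_s^{H_f^Q}=\omega_{H_f^Q}$ for every $s$, so the right-hand side becomes simply $\omega_{H_f^Q}(\rho^Q)$, independent of $t$. Next, I would take the infinite-time average over $t$ on the left and use linearity of the composition $\omega_{H_f^Q}\circ\tr_R\circ\Lambda$ to pull the average inside; the orbit average of $\hat\sigma^Q\otimes\hat\sigma^R$ then becomes $\omega_{H_i^{QR}}(\hat\sigma^Q\otimes\hat\sigma^R)$, and the standing assumption $\omega_{H_i^{QR}}=\omega_{H_i^Q}\otimes\omega_{H_i^R}$ factors this into $\omega_{H_i^Q}(\hat\sigma^Q)\otimes\omega_{H_i^R}(\hat\sigma^R)$, producing the identity above.

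Second, I would apply the resource-theoretic argument of Theorem~\ref{thm:mainresult} verbatim. The channel $\Phi\coloneqq\omega_{H_f^Q}\circ\tr_R\circ\Lambda$ maps $\one^{QR}/d_{QR}$ to $\one^Q/d_Q$, because $\Lambda$ is unital, the partial trace sends $\one^{QR}/d_{QR}$ to $\one^Q/d_Q$, and dephasing fixes the maximally mixed state. Writing the resilience as in \eqref{eq:connectiontsalliswithdeff}, applying the identity above, and invoking the data-processing inequality \eqref{eq:dpi} for $\alpha=2$ yields
\[
\R(\rho^Q,H_f^Q)\leq D_2\bigl(\omega_{H_i^Q}(\hat\sigma^Q)\otimes\omega_{H_i^R}(\hat\sigma^R)\,\big\|\,\one^{QR}/d_{QR}\bigr),
\]
and additivity of $D_2$ on product states rewrites the right-hand side as $\R(\hat\sigma^Q,H_i^Q)+\R(\hat\sigma^R,H_i^R)$, establishing \eqref{eq:mainresult2}.

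The main obstacle is the middle step above: the condition \eqref{eq:t_independence} only connects the two orbits through a potentially irregular function $t\mapsto t'(t)$, so averaging both sides directly over $t$ would require controlling the pushforward of the uniform measure onto the $H_f^{QR}$-orbit, which is not given. The trick that sidesteps this is to apply $\omega_{H_f^Q}$ \emph{first}, thereby erasing all $t'$-dependence, and only then to average over $t$. Once this manoeuvre is performed, the non-stationary case has been reduced to a clean data-processing plus additivity argument identical in spirit to the stationary one.
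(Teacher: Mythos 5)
Your proof is correct and follows essentially the same route as the paper: the central manoeuvre of applying the final dephasing map first (so that $\omega_{H_f}\circ\T_{t'}^{H_f}=\omega_{H_f}$ kills the $t'$-dependence) and only then time-averaging over $t$ to produce $\omega_{H_i^{QR}}$ is exactly the computation in the paper's Supplemental Material, after which data-processing for $D_2$ and additivity on product states finish the argument just as in Theorem~\ref{thm:mainresult}. The only difference is presentational — you fold the paper's separately stated monotonicity and additivity lemmas into one chain and make explicit the (correct) reduction of condition \eqref{eq:t_independence} to the $Q$-marginal via the non-interacting structure of $H_f^{QR}$.
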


In the proof one uses Eq.~\eqref{eq:t_independence} to show that for such channels the monotonicity relation
Eq.~\eqref{eq:monotonicity} is also true for non-stationary states. Using the
property $\omega_{H^{QR}_i} = \omega_{H^Q_i}\otimes\omega_{H^R_i}$ one shows that the
additivity relation Eq.~\eqref{eq:additivity_resilience} is also fulfilled for
non-stationary states $\hat\sigma^{QR}$ (see the Supplemental Material \cite{Note3} for details).
The rest of the proof is then completely analogous to that of Theorem~\ref{thm:mainresult}.
\\
\section{Conclusions and outlook}\label{sec:conclusions}
Using an approach inspired by so-called resource theories
\cite{Horodecki2002,Brandao2008,Brandao2013,Brandao2015,Rio2015,Coecke2016} we
have studied the cost of preparing states that do not equilibrate. We have
shown that in order to prepare a large interacting system in a state that can
withstand equilibration in the long run it is necessary to have
access to another large system that can withstand equilibration.
We have also
discussed readings of what could be called a second law  of
equilibration for dynamics under unitary evolutions, with an emphasis on the role of correlations. Our work
connects the recently emerging field of resource theories of coherence
\cite{Bartlett2007,Vaccaro2008,Gour2008,Marvian2014,Baumgratz2014,Lostaglio2015,Lostaglio2015a,Streltsov2016,Winter2016,StructureCoherence}
with the pure state quantum statistical mechanics approach, establishing a
bridge between two sub-fields of modern quantum thermodynamics.
Our results show that such a resource theoretical
approach can be used to derive new and general results about long-standing
problems \ro{in the} foundations of statistical mechanics via
information theoretic methods.

\paragraph*{Acknowledgements.}
The group at FUB thanks
the DFG (EI 519/7-1, GA 2184/2-1, CRC 183-project B02), the European Research Council (TAQ), and the EC (AQuS) for support. H.\ W. further thanks the Studienstiftung des Deutschen Volkes for support. 
C.\ G.~is supported by a Marie Skłodowska-Curie Individual Fellowships (IF-EF) program under GA: 700140 by the European Union and acknowledges financial support from the European Research Council
(CoG QITBOX and AdG OSYRIS), the Axa Chair in Quantum Information Science,
Spanish MINECO (FOQUS FIS2013-46768, QIBEQI FIS2016-80773-P and Severo Ochoa Grant No.~SEV-2015-0522), and the Fundaci\'{o} Privada Cellex, and Generalitat de Catalunya (Grant No.~SGR 874 and 875, and CERCA Program).

\bibliographystyle{apsrev4-1}
\bibliography{low_effective_dimension}

\cleardoublepage
\phantom{x}
\makeatletter
\newcommand{\manuallabel}[2]{\def\@currentlabel{#2}\label{#1}}
\makeatother
\manuallabel{monotonicity_and_additivity}{Section~A}
\manuallabel{localquench}{Section~B}
\manuallabel{role_of_correlations}{Section~C}
\manuallabel{effective_dimensions}{Section~D}
\manuallabel{shannon}{Section~E}

\appendix
\makeatletter
\onecolumngrid
\begingroup
\frontmatter@title@above
\frontmatter@title@format
\@title{} Supplemental Material

\endgroup
\vspace{1cm}
\twocolumngrid
\makeatother
\setcounter{page}{1}

\subsection{\ref*{monotonicity_and_additivity}: Monotonicity and additivity of the resilience}
Let us first show the monotonicity of the resilience (Eq.~\eqref{eq:monotonicity} in the main text).
For the proof of Theorem~\ref{thm:mainresult} one requires monotonicity of the resilience for unital channels $\Lambda$ and stationary given states $\hat\sigma^{Q}$ and $\hat\sigma^{R}$, as expressed before Eq.~\eqref{eq:monotonicity} in the main text.
For Theorem~\ref{thm:mainresult2} one requires monotonicity for non-stationary given states $\hat\sigma^{QR}$ but it is sufficient to prove it for unital channels $\Lambda$ which fulfill in addition
\begin{equation}\label{eq:supp:t_independence}
 \Lambda \circ  \T_{t}^{H_i^{QR}} (\hat\sigma^Q \otimes \hat\sigma^R) = \T_{t'}^{H^{QR}_f} \circ \Lambda  
 (\hat\sigma^Q \otimes \hat\sigma^R)
\end{equation}
Note that this condition is automatically fulfilled for any $\Lambda$ whenever $\hat\sigma^{QR}$ is stationary.
Hence, it suffices to show monotonicity of the resilience for maps fulfilling \eqref{eq:supp:t_independence}.
\he{In the following, we drop the superscripts $QR$ when not explicitly needed to simplify the notation.}

Because of the definition of $\omega_H$ (above
\eqref{eq:def:effectivedimension} in the main text), 
for any two fixed Hamiltonians $H_i$ and $H_f$, we can employ condition \eqref{eq:supp:t_independence} to show in a compact way that
\begin{align}
 (\omega_{H_f} \circ \Lambda \circ \omega_{H_i}) (\hat\sigma) &= 
 \left(\omega_{H_f} \circ \Lambda \circ \lim_{T \rightarrow \infty} \frac{1}{T}\int_{0}^{T} \mathrm{d}t \: \T_t^{H_i}\right) (\hat\sigma)\\
&= \lim_{T \rightarrow \infty} \frac{1}{T}\int_{0}^{T} \mathrm{d}t \: (\omega_{H_f} \circ \Lambda \circ \T_t^{H_i})(\hat\sigma)\\
&= \lim_{T \rightarrow \infty} \frac{1}{T}\int_{0}^{T} \mathrm{d}t \: (\omega_{H_f} \circ \T_{t'}^{H_f} \circ \Lambda) (\hat\sigma) \label{eq:tprime} \\
&= \lim_{T \rightarrow \infty} \frac{1}{T}\int_{0}^{T} \mathrm{d}t \: (\omega_{H_f} \circ \Lambda) (\hat\sigma) \label{eq:tprimegone} \\
 \label{eq:monotonicity_proof0}&= (\omega_{H_f} \circ \Lambda)(\hat\sigma),
\end{align}
where \eqref{eq:tprime} follows from condition \eqref{eq:supp:t_independence},
\eqref{eq:tprimegone} follows from the fact that $\omega_H \circ \T_t^{H}=\omega_H
$ for all $t>0$, and \eqref{eq:monotonicity_proof0} simply because the integrand
does not depend on $t$.
With this we can now show monotonicity as
\begin{align}
 R(\hat\sigma,H_i) &= D_2 (\omega_{H_i} (\hat\sigma) \| \one/d)\\
 \label{eq:monotonicity_proof1}&= D_2 (\omega_{H_i} (\hat\sigma) \| \omega_{H_i} (\one/d)) \\
\label{eq:monotonicity_proof2}&\geq D_2 ((\omega_{H_f} \circ \Lambda \circ \omega_{H_i}) (\hat\sigma) \| (\omega_{H_f} \circ \Lambda \circ \omega_{H_i} )(\one/d)) \\
 \label{eq:monotonicity_proof3}&= D_2 ((\omega_{H_f} \circ \Lambda)(\hat\sigma) \| \one/d)\\
      &=R(\Lambda(\hat\sigma),H_f),
\end{align}
where \eqref{eq:monotonicity_proof1} follows since $\omega_H$ is a unital map, \eqref{eq:monotonicity_proof2} is a consequence of the data-processing inequality \eqref{eq:dpi}, and \eqref{eq:monotonicity_proof3} follows from \eqref{eq:monotonicity_proof0} and the fact that $\Lambda$ is a unital map.

We now turn to prove additivity of the resilience
\eqref{eq:additivity_resilience}. For the proof of Theorem~\ref{thm:mainresult}
one requires additivity of the resilience for given
states $\hat\sigma^{QR}=\hat\sigma^Q \otimes \hat\sigma^R$ that are stationary with respect to a non-interacting Hamiltonian $H_i^{QR}$.
In this case one finds
\begin{align}
  \omega_{H_i^{QR}}(\hat\sigma^{QR})&=\omega_{H_i^{QR}}(\hat\sigma^Q \otimes \hat\sigma^R)\\
  \label{eq:app:condition_Hamiltonians}&=\omega_{H^Q_i}(\hat\sigma^Q)\otimes \omega_{H^R_i}(\hat\sigma^R).
\end{align}
For the proof of Theorem~\ref{thm:mainresult2} one needs to show additivity for non-stationary initial states, under the extra assumption that
\begin{align}\label{eq:app:no_correlation_equilibration}
 \omega_{H^{QR}_i}=\omega_{H^Q_i} \otimes \omega_{H^R_i},
\end{align}
which in turns imply \eqref{eq:app:condition_Hamiltonians}. Hence, we conclude
that Eq.~\eqref{eq:app:condition_Hamiltonians} is fulfilled under the conditions of both Theorem~\ref{thm:mainresult} and \ref{thm:mainresult2}.
We can now use this equation together with additivity of the R\'{e}nyi divergences to show
additivity of the resilience as
\begin{align}
	&\R(\hat\sigma^Q \otimes \hat\sigma^R, H^{QR}_i)\nonumber\\
 &=D_2\big( \omega_{H^{QR}_i}(\hat\sigma^Q \otimes \hat\sigma^R )\| \one/d_{QR}\big)\\
 &=D_2\big( \omega_{H^Q_i}(\hat\sigma^Q)\otimes \omega_{H^R_i}( \hat\sigma^R) \|
  \one/d_{QR}\big)\\
  & =D_2\big( \omega_{H^Q_i}(\hat\sigma^Q) \| \one/{d_Q}\big)+D_2\big( \omega_{H^R_i}(\hat\sigma^R) \| \one/{d_R}\big) \\
 &=\R(\hat\sigma^Q,H^Q_i)+\R(\hat\sigma^R,H^R_i).
\end{align}

\subsection{\ref*{localquench}: Local quenches}

The statement of Corollary 2 is an immediate consequence of Theorem~\ref{thm:mainresult}. Consider a quantum many-body system
initially prepared in $\hat \sigma^Q$. Then one applies the local quantum channel $\Phi$ to a
\he{subsystem $X$ of $Q$ that is $d_X$-dimensional.} \he{In order to implement this quantum channel, one
considers an auxiliary system $R$ of dimension $d_R=d_X$ 
initially prepared in a mixed state $\hat\sigma^R$ so that 
\begin{equation}
	\Phi (\hat \sigma^Q) = \Tr_R\left(
	U (\hat \sigma^Q\otimes \sigma^R)U^\dagger\right)
\end{equation}
for a suitable unitary $U$ acting upon $R$ and $Q$. Such a state $\hat\sigma^R$ and unitary $U$ exist for any quantum channel $\Phi$ \cite{Nielsen2000}.
Since we can choose the Hamiltonian on $R$ freely, we can assume that $\hat\sigma^R$ is stationary.
The statement of the corollary then follows from Theorem~\ref{thm:mainresult} and the bound
\begin{equation}
	\R(\hat \sigma^R,H^R_i) \leq \log(d_R) = \log(d_X),
\end{equation}
which immediately follows from the definition of the effective dimension.
}

\subsection{\ref*{role_of_correlations}: The role of correlations}
In this section, we discuss in more detail the role of correlations and the proof of Theorem~\ref{thm:no_second_law}.
We begin with the proof of Theorem~\ref{thm:no_second_law}.
As stated in the main text, it relies on the following theorem, which we adapt from Ref.~\cite{Mueller2017}, whereby for any state $\hat\sigma$ we denote by $S(\hat\sigma)$ its von~Neumann entropy.

\begin{theorem}[Ref.~\cite{Mueller2017}]\label{thm:mueller}
For any two finite-dimensional states $\hat\sigma^Q$ and
$\hat\rho^Q$ of the same dimension with
$S(\hat\sigma^Q)\leq S(\hat\rho^Q)$, for any $\delta I>0$ and any $\epsilon>0$ there exists a
finite-dimensional state $\hat\sigma^R$ and a \he{mixture of unitaries $\Lambda$} such that
 \begin{enumerate}[(a)]
 \item \label{enum:correct_state_on_q} The channel produces the state $\hat\rho^Q$ on $Q$ to accuracy $\epsilon>0$,\begin{equation}
	 \norm{\hat\rho^Q - \tr_R(\Lambda(\hat\sigma^Q \otimes \hat\sigma^R))}_1 < \epsilon\je{.}
	 \end{equation}
 \item The state on $R$ after $\Lambda$ coincides with the state in which it was originally given:
  \begin{equation}\hat\sigma^R = \tr_Q(\Lambda(\hat\sigma^Q\otimes \hat\sigma^R)) \eqqcolon \hat\rho^R\je{.}	 \end{equation}
 \item The mutual information between $R$ and $Q$ after $\Lambda$ has acted is upper bounded by $\delta I$,
 \begin{equation}
  D_1(\Lambda(\hat\sigma^Q \otimes \hat\sigma^R) \| \hat\rho^Q\otimes\hat\rho^R) \leq \delta I.
  \end{equation}
 \end{enumerate}
 The Hilbert space dimension of $R$ may in general depend on both $\epsilon$ and $\delta I$.
\end{theorem}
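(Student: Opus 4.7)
The plan is to deduce this theorem from the catalytic entropy theorem in the resource theory of noisy operations. First, I would reduce the problem to a classical one. Since a mixture of unitaries can include conjugation by any single fixed unitary as a special case, I can pre-rotate to align the eigenbases of $\hat\sigma^Q$ and $\hat\rho^Q$, so that without loss of generality both states are diagonal in a common basis with eigenvalue vectors $p$ and $q$ obeying $H(p)\le H(q)$. The quantum theorem then reduces to a purely classical statement: exhibit a probability distribution $r$ on an ancilla alphabet and a doubly stochastic matrix whose action on $p\otimes r$ produces a joint distribution with $Q$-marginal $\epsilon$-close in total variation to $q$, $R$-marginal exactly $r$, and mutual information at most $\delta I$. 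By Birkhoff's theorem the doubly stochastic matrix is a convex combination of permutations, which lifts back to a mixture of unitaries on $QR$.

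Second, I would construct the catalyst following the Jonathan--Plenio idea underlying catalytic majorization. I would take $r$ to be a ``staircase'' distribution on an ancilla of large dimension $N$, with geometrically decreasing weights tuned so that $p\otimes r$ majorizes a distribution within trace-distance $\epsilon/2$ of $q\otimes r$. The entropy condition $H(p)\le H(q)$ is precisely what makes such a staircase exist in the limit $N\to\infty$: each rung transfers a controllable amount of entropy from catalyst to system, and the accumulated transfer need only exceed $H(q)-H(p)$. Hardy--Littlewood--P\'olya then yields the required doubly stochastic map, and conditions (a) and (b) follow directly.

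The genuinely new ingredient, and the main obstacle, is condition (c): ensuring that the mutual information between $Q$ and $R$ at the output is below $\delta I$. Standard catalytic majorization constructions build up large correlations between system and catalyst, so to defeat them I would append to $R$ an auxiliary ``register'' factor whose role is to randomize over which rung of the staircase is active. Choosing the register dimension polynomial in $1/\delta I$, an AEP-type calculation shows that the conditional distribution of $Q$ given a typical register value is essentially independent of that value, contracting the mutual information like the inverse register size. The delicate point is that this smearing must be performed without disturbing either the $Q$-marginal beyond $\epsilon$ or the $R$-marginal at all; the existence of such a trade-off, with the dimension of $R$ allowed to depend on both $\epsilon$ and $\delta I$, is the technical heart of Ref.~\cite{Mueller2017}, and I would carry it over verbatim to close the argument.
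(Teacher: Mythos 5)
The paper does not actually prove this statement: it is imported as an external result, ``adapted from Ref.~\cite{Mueller2017}'', so the only meaningful comparison is between your sketch and the construction in that reference. Taken on its own terms, your sketch has a genuine gap at the step you present as routine. A Jonathan--Plenio-type ``staircase'' catalyst is a device for \emph{exact} catalytic majorization $p\otimes r \succ q\otimes r$, and the existence of such an $r$ is \emph{not} governed by $H(p)\le H(q)$: by the Klimesh/Turgut characterization it requires an entire one-parameter family of R\'enyi-entropy inequalities (including negative orders), which the Shannon-entropy condition does not imply. Your proposed relaxation---requiring only that $p\otimes r$ majorize something within trace distance $\epsilon/2$ of $q\otimes r$---breaks the theorem in the opposite direction: the $R$-marginal of the output is then only $\epsilon$-close to $r$, so condition (b), the \emph{exact} return of the catalyst, does not ``follow directly''; and once approximate return of the catalyst is tolerated, the entropy hypothesis becomes vacuous, since van Dam--Hayden-type embezzling catalysts then allow the approximate preparation of \emph{any} target state. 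The entire content of the theorem is that insisting on exact restitution of $R$ while paying only an arbitrarily small correlation cost is what singles out the von Neumann entropy; your construction never enforces exactness of the $R$-marginal at any stage.

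The construction in Ref.~\cite{Mueller2017} is structurally different: the catalyst carries an explicit classical register together with a chain of intermediate states interpolating between $\hat\sigma^Q$ and $\hat\rho^Q$, and the mixture of unitaries is built so that exact preservation of the catalyst marginal holds \emph{by construction} (it is not recovered a posteriori from a majorization estimate), with the residual mutual information controlled by the register length. Your correct observations---the reduction to diagonal states via pre- and post-rotation, and the Birkhoff/Hardy--Littlewood--P\'olya lifting of a doubly stochastic matrix to a mixture of permutations---are sound but peripheral. Your final paragraph then proposes to ``carry over verbatim'' the technical heart of Ref.~\cite{Mueller2017}; but that heart is exactly the part your staircase-plus-register construction was meant to supply, so as written the proposal is a citation dressed as a proof, and the portion that is fleshed out rests on a majorization criterion that is false under the stated hypothesis.
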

\begin{proof}[Proof of Theorem~\ref{thm:no_second_law}]
For any fixed finite-dimensional Hilbert space, the above Theorem~\ref{thm:mueller} shows that
whenever a state $\rho^Q$ has higher von~Neumann entropy than
the state $\hat\sigma^Q$, then properties (\ref{enum:resource_unchaged}) and
(\ref{enum:final_correlations}) of Theorem~\ref{thm:no_second_law} are met. It
remains to show that it is always possible to find states $\hat\sigma^Q$ and
$\rho^Q$
that simultaneously fulfill $S(\hat\sigma^Q)\leq S(\rho^Q)$, have property~\mbox{(\ref{enum:change_of_resiliance})} of Theorem~\ref{thm:no_second_law}, and are such
that $\rho^Q$ does not equilibrate (property \mbox{(\ref{enum:non_equilibrating})}).

As required, we will hence now show that there
exist $\sigma_{(n)}^Q$ and $\rho_{(n)}^Q$ with the following properties:
\begin{enumerate}[(i)]
\item $S(\hat\sigma_{(n)}^Q) \leq S(\rho_{(n)}^Q)$,\label{cond:1}
\item $\Delta \R^{Q}_{(n)} \rightarrow \infty$ as $n\rightarrow \infty$,\label{cond:2}
\item $\rho_{(n)}^Q$ does not equilibrate.\label{cond:3}
\end{enumerate}
In the following let $d=D^n$ be the dimension of the Hilbert space of the $n$-partite system.
The initial states $\rho_{(n)}^Q$ of the non-equilibrating trajectories can be taken to be of the form
\begin{align}\label{eq:app:state}
 \rho_{(n)}^Q = a \proj{\Psi} + (1-a) \frac{\Pi}{d-2} ,
\end{align}
where $\ket\Psi = (\ket{E_1} + \ket {E_2})/\sqrt{2}$ is a superposition of two arbitrary energy-eigenstates of the Hamiltonian $H_{(n)}^Q$ (which can be chosen arbitrarily), $\Pi$ is the projector on the subspace orthogonal to $\ket{E_1}$ and $\ket{E_2}$, and $a \in [0,1]$.
This family of states has an entropy given by
\begin{align}
  S_1&\left(\rho_{(n)}^Q\right) \nonumber\\
     &= a S_1(\proj{\Psi}) + (1-a) S_1\left(\frac{\Pi}{d-2}\right) + H_2(a),\\
     &=(1-a)\,n\,\log(D) + (1-a)\,\log(1- 2 D^{-n}) + H_2(a),  \\
     &\approx (1-a)\,n\,\log(D) + H_2(a),
\end{align}
where $H_2(a) = - a\log(a) - (1-a)\log(1-a)$ is the binary entropy of $a$ and the error in the last approximation is exponentially small in $n$ for large $n$.
On the other hand, the effective dimension of $\rho_{(n)}^Q$ approaches a constant:
\begin{align}\label{eq:defffinal}
	\deff_{H_{(n)}^Q}\left(\rho_{(n)}^Q\right) =\frac{1}{ a^2 + \frac{(1-a)^2}{d-2}}\leq \frac{1}{a^2}.
\end{align}
In \ref{effective_dimensions}, this scaling is illustrated in more
detail.
It is clear that the states $\rho_{(n)}^Q$ do not equilibrate
since there will be Rabi-oscillations at frequency $E_1-E_2$ with amplitude $a$ independent of $n$
for all times.
This proves (\ref{cond:3}).

We now turn to constructing the family of given states $\hat\sigma_{(n)}^Q$.
Clearly, for any finite $n$ such states can be constructed. We thus now
focus on the case of arbitrarily large $n$. To fulfill condition
(\ref{cond:1}) for large enough $n$ it suffices to have
\begin{align}\label{eq:entropyscaling}
	S_1\left(\hat\sigma^{Q}_{(n)}\right) \leq (1-\tilde{a})\, n\, \log(D) ,
\end{align}
for any $\tilde{a}>a$.
To fulfill condition (\ref{cond:2}) it suffices to have
\begin{equation}\label{eq:divergingS2}
\lim_{n \rightarrow \infty} S_2(\hat\sigma_{(n)}^Q) =\infty.
\end{equation}
To see this, note that the resilience of the initial state can be written as
\begin{align}
	\R(\hat\sigma_{(n)}^Q,H_{(n)}^Q) = \log(d) - S_2(\hat\sigma_{(n)}^Q),
\end{align}
which holds for any Hamiltonian $H_{(n)}^Q$ for which $\sigma_{(n)}^Q$ is stationary.
Using this together with the bound on the effective dimension of the final
state~\eqref{eq:defffinal}, we obtain for the change of resilience
\begin{align}
	\Delta \R^Q &= \R\left(\rho_{(n)}^Q,H_{(n)}^Q\right) - \R\left(\hat\sigma_{(n)}^Q,H_{(n)}^Q\right) \\
              &\geq \log\left(\frac{d}{a^2}\right) - \R\left(\hat\sigma_{(n)}^Q,H_{(n)}^Q\right) \\
              &= S_2(\hat\sigma_{(n)}^Q) - 2\log(a).
\end{align}
Hence we see that the change in resilience becomes arbitrarily large as long as the
R\'{e}nyi-2 entropy of the given states on $Q$ diverges with $n$.
In general, the R\'{e}nyi-2 entropy $S_2$ is upper bounded by the von~Neumann entropy $S_1$, but can
be arbitrarily close to the latter.
Therefore a diverging $S_2$ is well compatible with Eq.~\eqref{eq:entropyscaling}.

We will now provide examples of states that indeed fulfill Eqs.~\eqref{eq:entropyscaling} and \eqref{eq:divergingS2} which concludes the proof.
There exist in fact many families of states fulfilling those conditions, but we present as an example a family of micro-canonical states $\hat\sigma_\Pi^{(n)}$, which is maximally mixed on a subspace $\Pi$ of dimension
$d^{\gamma}$ with $\gamma<1$.
This is the scaling that one expects for an
actual micro-canonical state in a local many-body system, since it leads to an
entropy scaling extensively with the system size:
\begin{align}
 S_1(\hat\sigma_\Pi^{(n)}) = \gamma\,n\,\log(D).
\end{align}
The constant $\gamma>0$ will of course depend on the effective temperature of the
state (i.e., the temperature of the canonical state with the same mean energy)
and can be made arbitrarily small.
Lastly, note that the R\'{e}nyi-2 entropy diverges for any value of $\gamma>0$
\begin{align}
S_2(\hat\sigma_\Pi^{(n)}) &=- \log \Tr ((\hat\sigma_\Pi^{(n)})^2)\\
&=- \log d^{-\gamma}= \gamma\, n\, \log (D).
\end{align}
This completes the proof.
\end{proof}

Another natural example of states fulfilling \eqref{eq:entropyscaling} and \eqref{eq:divergingS2} is given by stationary initial states which fulfill
exponential clustering of correlations and have non-maximal entropy density.
Indeed it has
been proven recently that states with exponential decay of correlations have
diverging effective dimension \cite{Farrelly2016} with respect to local Hamiltonians.

Finally, as a side note, let us show that the resilience in the state of the resource
$\R(\hat\sigma_{(n)}^R,H_{(n)}^R)$, and hence also its Hilbert space dimension, must diverge as $\epsilon\rightarrow 0$.
Suppose to the contrary that this would not be true.
Then there would exist constants $C$ such that
\begin{align}\label{eq:uniformboundresilience}
	\R(\hat\sigma_{(n)}^R,H_{(n)}^R) \leq C_n \quad \forall \epsilon.
\end{align}
For simplicity we now fix a system size and drop all the $n$-dependence.
We can use the resource $\hat\sigma^R$ to bring $m$ uncorrelated copies of $\hat\sigma^Q$ into a non-equilibrating state $\rho^{Q_1\cdots Q_m}$.
By combining Theorem~\ref{thm:mainresult} with the bound~\eqref{eq:uniformboundresilience} we would then obtain
\begin{align}
 C \geq \R (\rho^{Q_1\cdots Q_m},H^{Q_1\cdots Q_m}) - m \R(\hat\sigma^{Q_1},H^{Q_1}).
\end{align}
Since in each step, we only correlated one of the systems with $R$ by an amount at most $\epsilon$, the final mutual correlations between the different copies of $Q$ are also bounded by $\epsilon$: $I(Q_i:Q_j)\leq \epsilon$.
Furthermore, the above equation holds, by assumption, for all $\epsilon$.
We can then take the limit $\epsilon\rightarrow 0$.
In this limit we have
\begin{align}
 \lim_{\epsilon\rightarrow 0} \rho^{Q_1\cdots Q_m} = \rho^{Q_1} \otimes\cdots\otimes \rho^{Q_m}.
\end{align}
Since the resilience is super-additive on product states, as we show in Lemma~\ref{lemma:superadditivity} below, we would therefore obtain
\begin{align}
 C \geq m \left(\R(\hat\rho^{Q_1},H^{Q_1}) - \R(\hat\sigma^{Q_1},H^{Q_1})\right).
\end{align}
Since this equation holds for all $m$, we obtain a contradiction.

\begin{lemma}[Super-additivity on product states]\label{lemma:superadditivity}
The resilience is super-additive on product states of non-interacting systems:
\begin{align}
	\R &\left(\rho^Q\otimes \rho^R, H^{QR}\right) \geq \R(\rho^Q,H^{Q})+ \R(\rho^R,H^{R}), 
\end{align}
if $H^{QR} = H^Q\otimes \one + \one\otimes H^R$.
\begin{proof}
By using the integral-representation of $\omega_{H^Q}$ as a time-average, it follows immediately that local and global dephasing commutes. We thus have
\begin{align}
	\omega_{H^Q}\otimes \omega_{H^R} &= \omega_{H^{QR}} \circ\left(\omega_{H^Q}\otimes \omega_{H^R}\right) \\
					 &= \left(\omega_{H^Q}\otimes \omega_{H^R}\right)\circ \omega_{H^{QR}}.
\end{align}
Using the data-processing inequality, we then obtain
\begin{align}
  \R&(\rho^Q\otimes \rho^R, H^{QR})\nonumber\\
&= D_2\left(\omega_{H^{QR}}(\rho^Q\otimes \rho^R) \| \one/d_{QR}\right)\\
&\geq D_2\left((\omega_{H^Q}\otimes \omega_{H^R})\circ \omega_{H^{QR}}(\rho^Q\otimes\rho^R) \| \one/d_{QR}\right)\\
&= D_2\left(\omega_{H^Q}\otimes \omega_{H^R}(\rho^Q\otimes\rho^R) \| \one/d_{QR}\right)\\
& = \mc R(\rho^Q,H^Q) + \mc R(\rho^R,H^R)
\end{align}
which finishes the proof. 
\end{proof}
\end{lemma}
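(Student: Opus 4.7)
The plan is to express the resilience as a R\'{e}nyi-2 divergence from the maximally mixed state and then reduce the global dephasing to a product of local dephasings via the data processing inequality. Once this reduction is in place, additivity of the R\'{e}nyi-2 divergence on tensor products will immediately split the resulting quantity into the desired two terms.

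Concretely, I would start from the identity $\R(\rho^Q\otimes\rho^R, H^{QR}) = D_2(\omega_{H^{QR}}(\rho^Q\otimes\rho^R)\,\|\,\one/d_{QR})$ and pick as processing channel the product of local dephasings $\Phi \coloneqq \omega_{H^Q}\otimes\omega_{H^R}$, which is manifestly unital and hence fixes $\one/d_{QR}$. The data processing inequality for $D_2$ would then yield
\begin{align*}
\R(\rho^Q\otimes\rho^R, H^{QR}) \geq D_2\bigl(\Phi\circ \omega_{H^{QR}}(\rho^Q\otimes\rho^R)\,\big\|\,\one/d_{QR}\bigr).
\end{align*}
The crucial step is to argue that $\Phi\circ\omega_{H^{QR}} = \Phi$ on arbitrary inputs, so that the right-hand side collapses to $D_2(\omega_{H^Q}(\rho^Q)\otimes\omega_{H^R}(\rho^R)\,\|\,\one/d_{QR})$, at which point additivity of $D_2$ on tensor products (together with $\one/d_{QR}=\one/d_Q\otimes\one/d_R$) finishes the proof.

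To establish the identity $\Phi\circ\omega_{H^{QR}} = \Phi$, I would write $\omega_H$ via its integral representation $\omega_H = \lim_{T\to\infty}\tfrac{1}{T}\int_0^T \T_t^H\,\mathrm{d}t$ and exploit the factorization $\T_t^{H^{QR}} = \T_t^{H^Q}\otimes\T_t^{H^R}$ valid for non-interacting Hamiltonians, together with the absorbing property $\omega_{H^Q}\circ \T_t^{H^Q} = \omega_{H^Q}$ (and likewise on $R$). Interchanging the time average with the linear local dephasings is routine. An equivalent route is to observe that because $H^{QR}$ has its eigenbasis in the product basis $\{\ket{E_k^Q}\otimes\ket{E_l^R}\}$, the map $\Phi$ already sets to zero every off-diagonal matrix element in that basis, which makes any prior dephasing with respect to $H^{QR}$ redundant.

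The main obstacle is precisely this commutation identity: it depends essentially on the non-interacting structure $H^{QR}=H^Q\otimes\one+\one\otimes H^R$ (so that the eigenbasis is a product basis), and it would generally fail for interacting $H^{QR}$. All remaining ingredients --- unitality of $\Phi$, the data processing inequality for $D_2$, and additivity of $D_2$ on product states --- have already been recorded and used in the main text, so once the commutation identity is secured the argument is just a chain of three elementary steps.
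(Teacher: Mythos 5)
Your proposal is correct and follows essentially the same route as the paper's proof: express the resilience as $D_2(\,\cdot\,\|\one/d_{QR})$, apply the data-processing inequality with the unital channel $\omega_{H^Q}\otimes\omega_{H^R}$, use the identity $(\omega_{H^Q}\otimes\omega_{H^R})\circ\omega_{H^{QR}}=\omega_{H^Q}\otimes\omega_{H^R}$ (which the paper also obtains from the integral representation of the dephasing maps), and conclude by additivity of $D_2$ on product states. No gaps.
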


\subsection{\ref*{effective_dimensions}: Comparison of effective dimensions}
As explained in the main text, our results are based on the effective dimension
as a suitable quantifier of the equilibrating properties of a system, which is
justified by bounds \eqref{eq:equilibrationonaverageforexpectationvalues} and
\eqref{eq:equilibration}. Nonetheless, there exists in the literature on the
topic similar bounds formulated in terms of quantities other than effective
dimension. That said, it seems natural to investigate whether alternative
definitions of resilience can be put forward based on those other bounds,
or even based on bounds which, although not yet shown to hold, are conceivably
true.

First let us consider bounds of the form \eqref{eq:equilibrationonaverageforexpectationvalues} and \eqref{eq:equilibration} based on the second largest of the eigenvalues of the time averaged, dephased state \cite{Reimann12}. This bound takes the form
\begin{align} 
  \he{\Var_H(A,\rho) \leq} \|A\|^2\ 3 \: \psecond ,\nonumber
\end{align}
where $\psecond$ is the second largest eigenvalue of $\omega_H(\rho)$. Indeed, in some situations this bound is stronger than the one based on the effective dimension. However, our bounds on the cost of preparing states that do not equilibrate cannot be analogously carried over using $\psecond$, instead of the effective dimension, since $\psecond$ is not a monotone under mixtures of unitaries \cite{Nielsen2000}. That is, it is not possible to construct an alternative measure of resilience based on $\psecond$ which fulfills monotonicity and additivity as given by \eqref{eq:monotonicity} and \eqref{eq:additivity_resilience}. 

We now discuss, in the light of our results, the possibility of strengthening bounds on equilibration. Note that combining Eqs. \eqref{eq:equilibrationonaverageforexpectationvalues}, \eqref{eq:equilibration} and \eqref{eq:connectiontsalliswithdeff} we see that 
\begin{align} \label{eq:bound_equilibration_alpha2}
\Var_H(A,\rho) \leq \|A\|^2\  e^{ -S_{2} (\omega_H (\rho))},
\end{align}
where $S_2$ is the R\'enyi-2 entropy. Here, the R\'enyi-$\alpha$ entropy is defined as
\begin{align}
	S_\alpha(\rho) \coloneqq \log(d) - D_\alpha(\rho \| \one/d) = \frac{1}{1-\alpha}\log(\tr(\rho^\alpha)). \nonumber
\end{align}
Eq.~\eqref{eq:bound_equilibration_alpha2} strongly suggests to consider the possibility of having bounds in terms of R\'enyi entropies $S_{\alpha}$ with $\alpha\neq 2$. That is, bounds of the form
\begin{align}\label{eq:general_bound_alpha}
\Var_H(A,\rho) \leq \|A\|^2\  g\left( S_{\alpha} (\omega_H (\rho))\right),
\end{align}
where $g$ is any function so that $\lim_{x\rightarrow \infty} g(x)=0$. Let us first consider the case $\alpha=1$, which yields the usual von~Neumann entropy $S_1$. We show now that a bound of the form \eqref{eq:general_bound_alpha} for $\alpha=1$ is impossible, since the state constructed in Eq.~\eqref{eq:app:state} provides a counter-example.
The energy distribution of this non-equilibrating state (see the discussion below Eq.~\eqref{eq:defffinal}) is 
\begin{align} \label{eq:twinpeakdistribtion}
	p_1 &= p_2 = \frac{a}{2}, & \forall\, 2 < k \leq d \colon p_k = \frac{1-a}{d-2} .
\end{align}
Using the entropy scaling given in Eq.\ \eqref{eq:entropyscaling} it
follows that $S_1$ grows linearly with the system's size, although the system does not equilibrate. The implications
of the impossibility of a bound of the type
\eqref{eq:general_bound_alpha} with $\alpha=1$ in relation with Theorem \ref{thm:no_second_law} are laid out in the next section.

The previous counter example together with the fact that $S_{\alpha}(\rho) \geq S_{\alpha'}(\rho)$ if $\alpha \leq \alpha'$ (see Ref.\  \cite{Tomamichel2016}) implies that it is also impossible to find bounds of the form \eqref{eq:general_bound_alpha} for $\alpha\leq 1$. Altogether this suggests a possibility of improving current bounds of equilibration \eqref{eq:bound_equilibration_alpha2} by formulating it in terms of $S_{\alpha}$ with $1<\alpha <2$. We hope this triggers further research on tighter equilibration bounds based on these insights.

\subsection{\ref*{shannon}: Theorem \ref{thm:no_second_law} and its relation to super-additivity}
As mentioned above, our results also imply that meaningful
equilibration bounds cannot be derived in terms of the von~Neumann entropy of the
time-averaged state.
In this section, we explain in detail how this relates with Theorem \ref{thm:no_second_law}. To do so, let us assume the contrary, that is, that a bound of the form \eqref{eq:general_bound_alpha} with $\alpha=1$ would exist.
In this case, a sensible definition of resilience would be given by
\begin{align}\label{eq:app:alternative_resilience}
	\tilde{\R} (\rho,H) \coloneqq D_1(\omega_H(\rho)\| \one/d).
\end{align}
As can be easily seen from the properties of the quantum relative
entropy, $\tilde{\R}$ is also additive and fulfills the data-processing
inequality.
Hence, Theorem~\ref{thm:mainresult} also holds in terms of
$\tilde{\R}$.
However, $\tilde{\R}$ has the additional property of being
\emph{super-additive}, meaning that if we consider a bipartite, non-interacting
system in a, possibly correlated, state $\rho^{QR}$, we have
\begin{align}
	\tilde{\R}(\rho^{QR},H^Q\otimes\one+H^R\otimes\one) \geq \tilde{\R}(\rho^Q,H^Q) + \tilde{\R}(\rho^R,H^R).
\end{align}
This follows from the super-additivity of the quantum relative entropy (or
equivalently from the sub-additivity of the von~Neumann entropy) \cite{Nielsen2000}.
We can use this property to obtain an even stronger result than Theorem~\ref{thm:mainresult} by essentially the same calculation:
\begin{align}
\tilde{\R}(\sigma^Q,H^Q) + \tilde{\R}(\sigma^R,H^R) &\geq \tilde{\R}(\rho^{QR},H^R+H^Q) \\
	 &\geq \tilde{\R}(\rho^{Q},H^Q) + \tilde{\R}(\rho^R,H^R).
\end{align}
Rearranging, we obtain the second-law like inequality
\begin{align}
	-\Delta\tilde{\R}^R \geq \Delta\tilde{\R}^Q.
\end{align}
Hence, one concludes that the results of Theorem \ref{thm:no_second_law}, showing a violation of this second-law like inequality, imply that it is not possible to find equilibration bounds of the form of Eqs. \eqref{eq:equilibrationonaverageforexpectationvalues} and \eqref{eq:equilibration} in terms of a sub-additive entropy, or equivalently, a super-additive divergence.

\clearpage
\end{document}